\documentclass[longbibliography,aps,pra,superscriptaddress,twocolumn,nofootinbib]{revtex4-2}
\usepackage[a4paper,margin=23mm]{geometry}
\usepackage{graphicx}
\usepackage{amsmath,amssymb,amsthm,mathtools,bm}
\usepackage{braket}
\usepackage{booktabs}
\usepackage{microtype}
\usepackage{xcolor}
\usepackage{quantikz}
\usepackage[hidelinks]{hyperref}
\usepackage[nameinlink,noabbrev]{cleveref}
\usepackage{natbib}
\newtheorem{theorem}{Theorem}
\newtheorem{lemma}{Lemma}
\newtheorem{proposition}{Proposition}

\newcommand{\id}{\mathbb{I}}
\newcommand{\norm}[1]{\left\lVert#1\right\rVert}
\newcommand{\eps}{\varepsilon}
\newcommand{\GHZ}{\mathrm{GHZ}}
\newcommand{\sgnbar}{\overline{\operatorname{sgn}}}
\newcommand{\ketbra}[2]{\ket{#1}\!\bra{#2}}

\begin{document}

\title{Noise Robust Self Testing from Genuine Local Operation Shared Randomness multipartite nonlocality Tests}

\author{Som Kanjilal}
\email{somkanjilal@gmail.com}
\begin{abstract}
 Violation of the $N$-partite inequality introduced in Phys. Rev. Lett. 129, 150401 (2022) for genuine multipartite nonlocality under local operations and shared randomness (LOSR) rules out all causal-network models obtained by locally composing arbitrary resources involving at most $N-1$ parties, even when supplemented by shared randomness among all $N$ parties. Here, we demonstrate a device-independent self-test protocol for this. We further establish an analytic noise-robust self-test. For a Bell-score deficit $\epsilon$, we derive explicit $O(\sqrt{\epsilon})$ vector-norm bounds for the state and the corresponding measurement actions.
\end{abstract}

\maketitle

\section{Introduction}
Self-testing, introduced by Mayers and Yao, is a device-independent certification procedure in which the unknown state and measurements of arbitrary dimension implemented in an physical experiment are inferred solely from the observed input-output statistics \cite{MayersYao2004}. More precisely, a self-testing statement establishes that every quantum realization producing the specified correlations is equivalent to a reference realization up to local isometries. Unlike quantum tomography, this conclusion requires neither a bound on the local Hilbert-space dimensions nor a prior characterization of the measurement devices, although it assumes the validity of quantum theory and the spatial separation of the parties \cite{SupicBowlesReview}.

Exact self-testing statements are commonly associated with an ideal extremal correlation, often obtained from the maximal quantum violation of a Bell inequality. Experimental implementations, however, inevitably produce non-optimal correlations. Robust self-testing addresses this discrepancy by deriving quantitative bounds on the distance between the extracted realization and the reference realization as a function of the observed Bell-score deficit. Early works established robust self-testing statements for bipartite and multipartite states \cite{McKagueYangScarani2012,Mc2014,YangEtAl2014,BancalNavascues2015}. An important subsequent development was to provide noise robust self-testing methods which produced simple analytic and nearly optimal robustness bounds \cite{Bamps2015,Kaniewski2016}. Robust multipartite self-tests have since been derived from scalable Bell inequalities for graph states \cite{BaccariEtAl2020}, while GHZ states and local incompatible measurements have been certified using, among others, chained Bell inequalities \cite{SupicEtAl2016,PaulEtAl2025}, graph-state inequalities \cite{BaccariEtAl2020}, and Svetlichny-type inequalities \cite{SinghSasmalPan2025,PaulAdhikaryPan2026}.

Most of these constructions are formulated relative to either standard Bell locality or a Svetlichny-type notion of genuine multipartite nonlocality. The latter excludes hybrid models in which, in each run, the parties are separated into groups that may share arbitrary correlations internally. A different, theory-agnostic notion of genuine multipartite nonlocality was introduced in \cite{CoiteuxRoyWolfeRenou2021PRL,CoiteuxRoyWolfeRenou2021PRA,Cao2022}. An $N$-partite correlation is called genuinely Local Operation Shared Randomness (LOSR) multipartite nonlocal if it cannot be generated by local operations acting on an arbitrary collection of causal generalized-probabilistic-theory resources, each shared by at most $N-1$ parties, even when all $N$ parties have access to unrestricted global shared randomness\cite{Tavakoli_2022}.

Mao \emph{et al.}\cite{MaoEtAl2022} constructed a family of Bell-type inequalities capable of witnessing this form of genuine multipartite nonlocality . The $N$-party inequality involves two binary measurements per party and only $N+2$ correlators, and its quantum maximum is attained by the $N$-qubit GHZ state with the corresponding stabilizer measurements. Violation of its LOSR bound proves that the observed correlations cannot be simulated by causal resources involving at most $N-1$ parties\cite{Tavakoli_2022}. Such a separation, however, does not by itself establish self-testing. In particular, it does not imply that the quantum state and measurements attaining the maximal value are essentially unique, nor does it quantify what can be certified when the observed value is below the quantum maximum.

The exact self-testing property of this functional is closely connected to the stabilizer construction of Zhao and Zhou
\cite{Zhao2022}. A suitable choice of GHZ stabilizers in their construction yields the same correlator structure, with different positive weights on the unpaired stabilizer terms. Since the constituent quantum bounds can be saturated simultaneously, this reweighting preserves the exact maximizing conditions. Our exact analysis therefore specializes their self-testing construction to the genuine LOSR witness of Ref.~\cite{MaoEtAl2022}. We give an explicit sum-of-squares decomposition and local SWAP isometry that form the basis of the robustness analysis.

Zhao and Zhou also investigated robustness numerically for selected graph states of three to six qubits \cite{Zhao2022}. Here, we derive an explicit analytic robustness bound for the Mao functional valid for every $N\geq3$.
Let $\eps=\beta_N^{\rm Q}-\langle\mathcal S_N\rangle$ denote the absolute Bell-score deficit, where
$\beta_N^{\rm Q}=2\sqrt{2}+2(N-2)$, and let $F_N=\langle\mathrm{GHZ}_N|\rho_{\rm sw}|
\mathrm{GHZ}_N\rangle$ be the fidelity of the extracted state.
We prove
\[
F_N\geq\max\{0,1-\kappa\eps\},
\qquad
\kappa=\frac{3+\sqrt{5}}{\sqrt{2}},
\]
with a coefficient independent of $N$ for this normalization of the Bell functional. We further derive explicit
$O(\sqrt{\eps})$ vector-norm bounds for the state and for individual and joint measurement actions. These results turn
a witness of genuine LOSR multipartite nonlocality into a quantitative, noise-robust device-independent certification
protocol.

\section{Preliminaries of Self-Testing}

In self-testing we distinguish between a \emph{physical experiment} and a \emph{reference experiment}. The physical experiment is the actual black-box experiment: it consists of an unknown state
\begin{equation}
\rho\in\mathcal{D}\left(\bigotimes_{i=0}^{N-1}\mathcal{K}_i\right),
\end{equation}
of arbitrary local dimension from the set of density operators of Hilbert space $\bigotimes_{i=0}^{N-1}\mathcal{K}_i$, $\mathcal{D}\left(\bigotimes_{i=0}^{N-1}\mathcal{K}_i\right)$.

We consider a purification of $\rho$ given by
\begin{equation}
\ket{\psi} \in \mathcal{D}\left(\left(\bigotimes_{i=0}^{N-1}\mathcal{K}_i\right)\otimes\mathcal{K}_{E}\right), \quad \operatorname{Tr}_{E}[\ketbra{\psi}{\psi}]=\rho.
\end{equation}
 The reference experiment consists of a specified target state
\begin{equation}
\ket{\psi^{\mathrm{ref}}}
\in\bigotimes_{i=0}^{N-1}\mathcal{H}_i^{\mathrm{ref}}
\end{equation}
and specified reference measurements $M_{i,x}^{\mathrm{ref}}$ acting on $\mathcal{H}_i^{\mathrm{ref}}$. In the present work,
$\mathcal{H}_i^{\mathrm{ref}}\cong\mathbb C^2$ and the target state will be the $N$-qubit GHZ state.

In the physical experiment, party $i$ receives a classical input $x_i\in\{0,1\}$ and performs a binary measurement with outcome $a_i\in\{0,1\}$. By applying local Naimark dilations to the measurements, we may, without loss of generality,  represent every binary measurement by a Hermitian reflection
\begin{equation}
O_{i,x_i}=O_{i,x_i}^{\dagger},
\qquad
O_{i,x_i}^{2}=\id_{\mathcal K_i}.
\label{eq:physical-reflections}
\end{equation}
Measurements belonging to distinct parties act on distinct tensor factors and therefore commute:
\begin{equation}
[O_{i,x_i},O_{k,x_k}]=0, \qquad i\neq k.
\label{eq:cross-party-commutation}
\end{equation}
The projectors associated with the two outcomes are
\begin{equation}
\Pi_{a_i|x_i}^{(i)}:=\frac{\id_{\mathcal{K}_i}+(-1)^{a_{i}}O_{i,x_i}}{2},
\qquad a_i\in\{0,1\}.
\label{eq:physical-projectors}
\end{equation}
For input and output strings
\begin{equation}
\mathbf{x}=(x_0,\ldots,x_{N-1}),\qquad\mathbf{a}=(a_0,\ldots,a_{N-1}),
\end{equation}
the observed behavior is
\begin{equation}
p(\mathbf{a}|\mathbf{x})=\bra{\psi}\bigotimes_{i=0}^{N-1}\Pi_{a_i|x_i}^{(i)}\ket{\psi}.
\label{eq:observed-behavior}
\end{equation}
In particular, the full correlator associated with
$\mathbf{x}\in\{0,1\}^{N}$ is
\begin{equation}
    E(\mathbf{x}):=\bra{\psi}
    \bigotimes_{i=0}^{N-1}O_{i,x_i}
    \ket{\psi}.
    \label{eq:full-correlator}
\end{equation}
More generally, the behavior also determines correlators associated with every subset of the parties.

A self-testing statement asserts that every physical realization producing the specified correlations is equivalent to the reference realization up to local changes of basis and unused local degrees of freedom. More precisely, there exist local isometries
\begin{equation}
    V_i:
    \mathcal{K}_i
    \longrightarrow
    \mathcal{K}_i^{\xi}
    \otimes\mathcal{H}_i^{\mathrm{ref}},
    \quad
    V:=\bigotimes_{i=0}^{N-1}V_i,
    \label{eq:local-isometries}
\end{equation}
and a normalized, possibly entangled, auxillary state
\begin{equation}
\ket{\xi}\in\mathcal{D}\left(\left(\bigotimes_{i=0}^{N-1}\mathcal K_i^{\xi}\right)\otimes\mathcal{K}_{E}\right)
\end{equation}
such that
\begin{equation}
(V\otimes\id_{E})\ket{\psi}=\ket{\xi}\otimes\ket{\psi^{\mathrm{ref}}}.
 \label{eq:selftest-state-definition}
\end{equation}

To self-test the measurement actions as well as the state, we require that, for every subset $J\subseteq\{0,\ldots,N-1\}$ and every choice of input $x_i$ for $i\in J$,
\begin{equation}
\begin{split}
(V\otimes\id_E)\left[\left(\bigotimes_{i\in J}O_{i,x_i}\right)\ket{\psi}\right]
    =
    &\ket{\xi}\otimes\\
    &\left(
       \bigotimes_{i\in J}O_{i,x_i}^{\mathrm{ref}}
    \right)
    \ket{\psi^{\mathrm{ref}}},  
\end{split}
\label{eq:selftest-measurement-definition}
\end{equation}
where identity operators on the parties outside $J$ are understood. The case $J=\varnothing$ reduces to Eq.~\eqref{eq:selftest-state-definition}.

The normalized auxiliary state $|\xi\rangle$ belong to the Hilbert space $\left(\bigotimes_{i=0}^{N-1}\mathcal K_i^{\xi}\right)\otimes\mathcal{K}_{E}$. We identify output spaces up to the fixed permutation placing all auxiliary factors before the reference qubits. Whenever $V$ acts on a purified vector, $V\otimes\id_E$ is understood. The notation $\id_\xi$ and $\operatorname{Tr}_\xi$ refers to the full auxiliary space, including $E$.

From now on, we will suppress the identity operator $\id_E$ for the sake of brevity. Every local isometry $V_i$ admits a unitary dilation $\Phi_i$ after appending a sufficiently large local ancilla. In particular,
\begin{equation}
    V_i\ket{\phi}
    :=\Phi_i\left(\ket{\phi}\otimes\ket{0}_{a_i}\right).
    \label{eq:isometry-unitary-extension}
\end{equation}
If necessary, the local ancillary space may be enlarged so that $\Phi_i$ admits a unitary extension. Defining
\begin{equation}
    \Phi:=\bigotimes_{i=0}^{N-1}\Phi_i,
    \quad
    \ket{0^N}_{a}:=\bigotimes_{i=0}^{N-1}\ket{0}_{a_i},
\end{equation}
the state self-testing relation of \cref{eq:selftest-state-definition} becomes
\begin{equation}
    \Phi\left(\ket{\psi}\otimes\ket{0^N}_{a}\right)
    =
    \ket{\xi}
    \otimes\ket{\psi^{\mathrm{ref}}}.
    \label{eq:unitary-selftest-state}
\end{equation}
Similarly, for every subset
$J\subseteq\{0,\ldots,N-1\}$, the measurement self-test of \cref{eq:selftest-measurement-definition} becomes
\begin{equation}
\begin{split}
    &\Phi\left[
       \left(
         \bigotimes_{i\in J}O_{i,x_i}
       \right)\ket{\psi}
       \otimes\ket{0^N}_{a}
    \right] \\                                                 \\
    &\qquad=
    \ket{\xi}\otimes
    \left(
       \bigotimes_{i\in J}O_{i,x_i}^{\mathrm{ref}}
    \right)
    \ket{\psi^{\mathrm{ref}}}.
\end{split}
\label{eq:unitary-selftest-measurements}
\end{equation}

Note that, the self-testing theorems use the projective representation on enlarged local spaces obtained by the Naimark dilation; transferring them to undilated POVM actions would require an additional argument \cite{SupicBowlesReview}.

\section{Genuine LOSR multipartite nonlocality Test}  

Consider a scenario involving $N\geq 3$ space-like separated parties. For each party, there are two inputs and for each input there are two possible outputs. The $y$th measurement of the first party is $A_{0,y}$ where $y\in\{0,1\}$. The $j$th measurement of the $i$th party where $i\geq 1$ and $j\in\{0,1\}$ is denoted as $M_{i,j}$. The genuine (LOSR) nonlocality functional of \cite{MaoEtAl2022} is then given by
\begin{equation}
 \label{eq:SN-alt}
\begin{split}
\mathcal S_N&=
 (A_{0,0}+A_{0,1})M_{1,0}+ (A_{0,0}-A_{0,1})\prod_{k=1}^{N-1}M_{k,1}\\
 &+2\sum_{k=1}^{N-2}M_{k,0}M_{k+1,0},
\end{split}
\end{equation}
where we suppressed the tensor product.
The LOSR-local bound is $\langle\mathcal S_N\rangle\le2(N-1)$, while the
quantum maximum is
\begin{equation}
 \beta_N^{\rm Q}=2\sqrt{2}+2(N-2).
 \label{eq:qmax}
\end{equation}
One realization of the maximum is
\begin{align}
 \ket{\GHZ_N}&=\frac{\ket{0^N}+\ket{1^N}}{\sqrt{2}},\\
 M_{k,0}^{\mathrm{ref}}&=\sigma_z, \quad M_{k,1}^{\mathrm{ref}}=\sigma_x,\quad (k\geq 1)\nonumber\\
 A_{0,0}^{\mathrm{ref}}&=\frac{\sigma_z+\sigma_x}{\sqrt{2}},
 \quad A_{0,1}^{\mathrm{ref}}=\frac{\sigma_z-\sigma_x}{\sqrt{2}}.
 \label{eq:reference-realization}
\end{align}
Equation~\eqref{eq:reference-realization} is a reference realization, not an assumption about the black boxes. For the first party (Bob) let us define
\begin{equation}
 A_+:=\frac{A_{0,0}+A_{0,1}}{\sqrt{2}},\qquad
 A_-:=\frac{A_{0,0}-A_{0,1}}{\sqrt{2}}.
 \label{eq:Hpm}
\end{equation}
They satisfy the operator identities
\begin{align}
 A_+^2+A_-^2&=2\id,\label{eq:Hsquare-sum}\\
 \{A_+,A_-\}&=\frac{1}{2}\{A_{0,0}+A_{0,1},A_{0,0}-A_{0,1}\}\nonumber\\
 &=A_{0,0}^2-A_{0,1}^2=0.
 \label{eq:H-anti}
\end{align}
Note that, $A_{\pm}$ need not be reflections. By defining
\begin{equation}
 P:=\prod_{k=1}^{N-1}M_{k,1},
 \label{eq:P-def}
\end{equation}
the nonlocality operator becomes
\begin{equation}
\begin{split}
\mathcal S_N & =
\sqrt{2}A_+M_{1,0} +\sqrt{2}A_-P\\
&+2\sum_{k=1}^{N-2}M_{k,0}M_{k+1,0}.
\end{split}
 \label{eq:SN-compact}
\end{equation}
The self testing proof is constructed by showing that every physical realization 
$$\{\ket{\psi},M_{k,i},A_{0,j}:i,j\in\{0,1\},k\in\{1,2,\ldots,(N-1)\}\}$$ 
attaining the quantum maximum satisfies \cref{eq:unitary-selftest-state} and \cref{eq:unitary-selftest-measurements} for the reference realization given by \cref{eq:reference-realization}.

\section{Self-testing with the genuine-network inequality}

The exact proof follows the standard methodology of other self-testing proofs. The sum-of-squares (SOS) decomposition first gives operator-on-state identities. Using those operator identities we prove the SWAP isometry. 

\subsection{The Sum-of-Squares (SOS) decomposition}

Let us define the two Bob-coupled SOS operators as
\begin{equation}
 \Delta_+:=M_{1,0}-A_+,\qquad \Delta_-:=P-A_-,
 \label{eq:delta-def}
\end{equation}

\begin{proposition}[Fixed SOS identity]
For genuine $N$ party nonlocality functional given by \cref{eq:SN-compact}, the following is true
\begin{equation}
\begin{split}
\beta_N^{\rm Q}\id-\mathcal S_N
 &=\frac{1}{\sqrt{2}}(\Delta_+^2+\Delta_-^2)\\
 &+\sum_{k=1}^{N-2}(M_{k,0}-M_{k+1,0})^2,
\end{split}
 \label{eq:SOS}
\end{equation}
where $\Delta_{\pm}$ is given by \cref{eq:delta-def} and $\beta_N^{\rm Q}$ is given by \cref{eq:qmax}.
\end{proposition}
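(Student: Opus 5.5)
The plan is a direct algebraic expansion of the right-hand side of \cref{eq:SOS}, using only the reflection property \cref{eq:physical-reflections}, the cross-party commutation \cref{eq:cross-party-commutation}, and the identity \cref{eq:Hsquare-sum} for $A_\pm$. No inequality or Bell-score input is needed; the claim is an exact operator identity.

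\emph{Expanding the Bob-coupled squares.} Write $\Delta_+^2=M_{1,0}^2-M_{1,0}A_+-A_+M_{1,0}+A_+^2$. Since $M_{1,0}$ acts on party $1$ and $A_+$ acts on party $0$, they commute, so the two cross terms combine to $-2A_+M_{1,0}$. Also $M_{1,0}^2=\id$. In the same way, $P=\prod_{k=1}^{N-1}M_{k,1}$ is a product of commuting reflections on distinct tensor factors. Hence $P$ is Hermitian with $P^2=\id$, and it commutes with $A_-$. This gives $\Delta_-^2=\id-2A_-P+A_-^2$.

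Summing and using $A_+^2+A_-^2=2\id$ from \cref{eq:Hsquare-sum}, we get
\begin{equation*}
\Delta_+^2+\Delta_-^2=4\id-2A_+M_{1,0}-2A_-P .
\end{equation*}
Dividing by $\sqrt{2}$ therefore yields $2\sqrt{2}\,\id-\sqrt{2}A_+M_{1,0}-\sqrt{2}A_-P$. This reproduces exactly the Bob-coupled part of $\beta_N^{\rm Q}\id-\mathcal S_N$ in the compact form \cref{eq:SN-compact}.

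\emph{The chain terms.} For each $k$, the operators $M_{k,0}$ and $M_{k+1,0}$ are commuting reflections on different parties. Hence
\begin{equation*}
(M_{k,0}-M_{k+1,0})^2=2\id-2M_{k,0}M_{k+1,0}.
\end{equation*}
Summing over $k=1,\dots,N-2$ gives $2(N-2)\id-2\sum_{k=1}^{N-2}M_{k,0}M_{k+1,0}$. Adding this to the Bob-coupled part, the identity coefficients total $2\sqrt2+2(N-2)=\beta_N^{\rm Q}$ by \cref{eq:qmax}, and the remaining terms are exactly $-\mathcal S_N$. This establishes \cref{eq:SOS}.

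\emph{Main obstacle.} The only delicate point is that $A_\pm$ are not reflections, so $A_\pm^2\neq\id$ individually. The argument must therefore combine the two squares before simplifying and invoke \cref{eq:Hsquare-sum}, rather than treating each $\Delta_\pm^2$ separately. The weight $1/\sqrt{2}$ is precisely what turns $4\id$ into $2\sqrt2\,\id$ while the cross terms acquire coefficient $\sqrt2$. The remaining requirement is to check that every cross term involves operators on disjoint parties, so they commute and the symmetrized products collapse.
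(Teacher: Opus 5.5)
Your proposal is correct and follows the paper's proof essentially line by line. You expand $\Delta_\pm^2$ using $M_{1,0}^2=P^2=\id$, cross-party commutation and $A_+^2+A_-^2=2\id$ to obtain $2\sqrt{2}\,\id-\sqrt{2}A_+M_{1,0}-\sqrt{2}A_-P$, then add the chain terms $2(N-2)\id-2\sum_k M_{k,0}M_{k+1,0}$; you are only more explicit than the paper about why the cross terms collapse (the operators act on disjoint parties and so commute).
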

\begin{proof}
Using \cref{eq:Hsquare-sum}, $M_{1,0}^2=\id$, and
$P^2=\id$, we get
\begin{align}
 \frac{1}{\sqrt{2}}(\Delta_+^2+\Delta_-^2)
 &=2\sqrt{2}\id-\sqrt{2}A_+M_{1,0}\nonumber\\
 & -\sqrt{2}A_-P.
 \label{eq:SOS-first}
\end{align}
The $N-2$ remaining operators on the right hand side of \cref{eq:SOS} gives
\begin{align}
 \sum_{k=1}^{N-2}(M_{k,0}-M_{k+1,0})^2&=2(N-2)\id\nonumber\\
 & -2\sum_{k=1}^{N-2}M_{k,0}M_{k+1,0}.
 \label{eq:SOS-chain}
\end{align}
Adding \cref{eq:SOS-first,eq:SOS-chain} yields \cref{eq:SOS}.
\end{proof}
Considering
\begin{equation}
 \Lambda_k:=M_{k,0}-M_{k+1,0}, \quad (1\leq k\leq N-2),
 \label{eq:lambda-def}
\end{equation}
 for a quantum state $\ket{\psi}$ \cref{eq:SOS} can be written as 
\begin{equation}
\label{eq:SOSexpect}
\beta_N^{\rm Q}-\langle\mathcal S_N\rangle
 =\frac{1}{\sqrt{2}}(\norm{\Delta_+}_{\psi}^2+\norm{\Delta_-}_{\psi}^2)
 +\sum_{k=1}^{N-2}\norm{\Lambda_k}_{\psi}^2
\end{equation}
where $\norm{X}_\psi:=\norm{X\ket{\psi}}=\sqrt{\langle \psi|X^{\dagger}X|\psi\rangle}$. Since every term on the right hand side of \cref{eq:SOSexpect} is a nonnegative number, we have the following result 
\begin{lemma}
\label{eq:SOS-operator-identity-for-max}
at maximal violation of the nonlocality functional of \cref{eq:SN-compact} we have 
\begin{equation}
 \Delta_+\ket{\psi}=\Delta_-\ket{\psi}
 =\Lambda_k\ket{\psi}=0,
 \label{eq:residual-annihilate}
\end{equation}
for all $1\leq k\leq N-2$.
\end{lemma}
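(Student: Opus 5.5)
The plan is to read the lemma directly off the sum-of-squares identity of the Fixed SOS proposition, in its state-expectation form \cref{eq:SOSexpect}. Maximal violation means $\langle\mathcal S_N\rangle=\beta_N^{\rm Q}$, so the left-hand side of \cref{eq:SOSexpect} is zero. The right-hand side is a sum with strictly positive coefficients ($1/\sqrt{2}$ and $1$) of the squared vector norms $\norm{\Delta_\pm}_\psi^2$ and $\norm{\Lambda_k}_\psi^2$, each of which is nonnegative. Hence every term must vanish separately, and $\norm{X\ket{\psi}}=0$ forces $X\ket{\psi}=0$ for $X\in\{\Delta_+,\Delta_-,\Lambda_1,\dots,\Lambda_{N-2}\}$. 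This is exactly \cref{eq:residual-annihilate}.

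The only step needing care is the passage from the operator identity \cref{eq:SOS} to \cref{eq:SOSexpect}. Taking the expectation in $\ket{\psi}$ gives terms of the form $\bra{\psi}\Delta_\pm^2\ket{\psi}$ and $\bra{\psi}\Lambda_k^2\ket{\psi}$. To identify these with $\norm{\Delta_\pm\ket{\psi}}^2$ and $\norm{\Lambda_k\ket{\psi}}^2$, I will check that each operator is Hermitian:
\begin{itemize}
\item $M_{1,0}$ and $P$ are Hermitian, since $P$ is a product of commuting Hermitian reflections on distinct tensor factors, by \cref{eq:physical-reflections,eq:cross-party-commutation}.
\item $A_\pm$ are real linear combinations of the Hermitian $A_{0,0}$ and $A_{0,1}$, so they are Hermitian even though they need not be reflections.
\item Each $\Lambda_k$ is a difference of Hermitian reflections, hence Hermitian.
\end{itemize}
Therefore $\Delta_\pm^2=\Delta_\pm^\dagger\Delta_\pm$ and $\Lambda_k^2=\Lambda_k^\dagger\Lambda_k$, which gives \cref{eq:SOSexpect} from \cref{eq:SOS}.

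I will also note that $\ket{\psi}$ is the purification and the operators act as $X\otimes\id_E$. The expectation value therefore equals the observed Bell score, and the annihilation relations hold on the full purified vector. There is no genuine obstacle; this is the standard first consequence of an SOS decomposition.
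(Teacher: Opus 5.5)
Your proposal is correct and matches the paper's argument: at $\langle\mathcal S_N\rangle=\beta_N^{\rm Q}$ the left side of \eqref{eq:SOSexpect} vanishes, and the right side is a positive combination of the squared norms $\norm{\Delta_\pm\ket{\psi}}^2$ and $\norm{\Lambda_k\ket{\psi}}^2$, so each vector must vanish. Your Hermiticity check, which justifies passing from \eqref{eq:SOS} to $\bra{\psi}X^2\ket{\psi}=\norm{X\ket{\psi}}^2$, only makes explicit a step the paper leaves implicit.
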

Now, we will find the local operator identities for a quantum strategy $\{\ket{\psi},A_{\pm}, (M_{i,0},M_{i,1}):i\geq 1\}$ which gives the exact maximum quantum violation of \cref{eq:SN-compact} by satisfying \cref{eq:residual-annihilate} and yielding the value $\beta_N^{Q}$.

\subsection{Operator identities for the optimal quantum strategy}

Let us rewrite \cref{eq:residual-annihilate} as
\begin{align}
 M_{1,0}\ket{\psi}&=A_+\ket{\psi},\quad
 P\ket{\psi}=A_-\ket{\psi},
 \label{eq:raw-exact}\\
 M_{1,0}\ket{\psi}&=M_{2,0}\ket{\psi}=\cdots=M_{N-1,0}\ket{\psi}.
 \label{eq:S-chain-exact}
\end{align}
Because $A_{\pm}$ commutes with $M_{1,0}$ and $P$ we can rewrite \cref{eq:raw-exact} as
\begin{align}
 A_+^2\ket{\psi}&=A_+M_{1,0}\ket{\psi}=M_{1,0}A_+\ket{\psi}=\ket{\psi},\label{eq:Hplus-onstate}\\
A_-^2\ket{\psi}&=A_-P\ket{\psi}=PA_-\ket{\psi}=\ket{\psi}.
\label{eq:Hminus-onstate}
\end{align}
 Inserting $A_{\pm}^2=\id\pm\{A_{0,0},A_{0,1}\}/2$ into
\cref{eq:Hplus-onstate,eq:Hminus-onstate} we get the local complementary relation for the first party
\begin{equation}
 \{A_{0,0},A_{0,1}\}\ket{\psi}=0,
 \label{eq:Bob-raw-anti}
\end{equation}
We now derive local complementarity relations for the rest.  Using
\cref{eq:raw-exact,eq:H-anti} and only cross-party commutation,
\begin{align}
 M_{1,0}P\ket{\psi}
 &=M_{1,0}M_{1,1}\left(\prod_{k\geq 2}M_{k,1}\right)\ket{\psi}\nonumber\\
 &\overset{\eqref{eq:raw-exact}}{=}M_{1,0}A_-\ket{\psi}\nonumber\\
 &=A_-M_{1,0}\ket{\psi}\overset{\eqref{eq:raw-exact}}{=}A_-A_+\ket{\psi}\nonumber\\
 &\overset{\eqref{eq:H-anti}}{=}-A_+A_-\ket{\psi}\overset{\eqref{eq:raw-exact}}{=}-A_+P\ket{\psi}\nonumber\\
 &=-PA_+\ket{\psi}\overset{\eqref{eq:raw-exact}}{=}-PM_{1,0}\ket{\psi}\nonumber\\
 & = -\left(\prod_{k\geq 2}M_{k,1}\right)M_{1,1}M_{1,0}\ket{\psi},\nonumber\\
\label{eq:Alice-anti-with-P}
\{M_{1,0},M_{1,1}\}&\left(\prod_{k\geq 2}M_{k,1}\right)\ket{\psi}=0.
\end{align}
Since the product term in \cref{eq:Alice-anti-with-P} commutes with $M_{1,0},M_{1,1}$ and $M_{k,1}^2=\id$, multiplying both sides of \cref{eq:Alice-anti-with-P} with $\prod_{k\geq 2}M_{k,1}$ we have the local complementary relation for the second party as
\begin{equation}
 \{M_{1,0},M_{1,1}\}\ket{\psi}=0.
 \label{eq:Alice-anti}
\end{equation}
For $k\geq 2$, multiplying the second equation of \cref{eq:raw-exact} by $M_{k,1}$ and using $M_{k,1}^2=\id$ we obtain
\begin{equation}
 M_{k,1}A_-\ket{\psi}=M_{1,1}\prod_{j\geq 2:j\ne k}M_{j,1}\ket{\psi}.
 \label{eq:T-transport}
\end{equation}
Then
\begin{align}
 M_{k,0}M_{k,1}A_-\ket{\psi}
 &=M_{1,1}\left(\prod_{j\geq 2:j\neq k}M_{j,1}\right)M_{k,0}\ket{\psi}\nonumber\\
 &\overset{\eqref{eq:S-chain-exact}}{=}M_{1,1}\left(\prod_{j\geq 2:j\ne k}M_{j,1}\right)M_{1,0}\ket{\psi}\nonumber\\
 &\overset{\eqref{eq:Alice-anti}}{=}-M_{1,0}M_{1,1}\left(\prod_{j\geq 2:j\ne k}M_{j,1}\right)\ket{\psi}\nonumber\\
 &\overset{\eqref{eq:T-transport}}{=}-M_{1,0}M_{k,1}A_-\ket{\psi}\nonumber\\
 &\overset{\eqref{eq:S-chain-exact}}{=}-A_-M_{k,1}M_{k,0}\ket{\psi}\nonumber\\
 \{M_{k,1},M_{k,0}\}A_-\ket{\psi} & = 0
 \label{eq:C-anti-transport}
\end{align}
where we use the fact that $A_{-}$ commutes with $M_{k,j}$ for $k\geq 1$. multiplying both sides with $A_{-}$ and using $A_{-}^2\ket{\psi}=\ket{\psi}$ we get the local complementary relations for the remaining parties
\begin{equation}
\{M_{k,0},M_{k,1}\}\ket{\psi}=0\quad (2\leq k\leq N-1).
 \label{eq:all-local-anti}
\end{equation}
Thus, we get the following result
\begin{lemma}
Let $\{\ket{\psi},A_{\pm}, (M_{i,0},M_{i,1}):i\geq 1\}$ constitutes the maximum quantum violation of the genuine network nonlocality functional given by \cref{eq:SN-compact}. The operators satisfy
\cref{eq:Bob-raw-anti,eq:Alice-anti,eq:all-local-anti} where $A_{0,y}=\frac{1}{\sqrt{2}}(A_+ +(-1)^{y}A_{-})$.
\end{lemma}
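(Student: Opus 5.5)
The lemma collects the derivations just carried out, so the plan is to assemble them in logical order from Lemma~\ref{eq:SOS-operator-identity-for-max}. I would first take the SOS identity \cref{eq:SOS} at $\langle\mathcal S_N\rangle=\beta_N^{\rm Q}$. By \cref{eq:SOSexpect}, every squared norm vanishes, which gives \cref{eq:residual-annihilate}, equivalently \cref{eq:raw-exact,eq:S-chain-exact}. The relation $A_{0,y}=\frac{1}{\sqrt{2}}(A_++(-1)^yA_-)$ is just the inversion of \cref{eq:Hpm}. It also uses that $A_\pm$ act on the first party only, so they commute with every $M_{k,j}$ and hence with $P$.

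For the first party, I would multiply $M_{1,0}\ket{\psi}=A_+\ket{\psi}$ by $A_+$. Commuting through and reusing the relation gives $A_+^2\ket{\psi}=\ket{\psi}$, and the same argument gives $A_-^2\ket{\psi}=\ket{\psi}$. Since $A_\pm^2=\id\pm\tfrac12\{A_{0,0},A_{0,1}\}$, either identity yields \cref{eq:Bob-raw-anti}.

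For party $1$, I would follow the chain in \cref{eq:Alice-anti-with-P}. It rewrites $M_{1,0}P\ket{\psi}$ as $-PM_{1,0}\ket{\psi}$ by alternately substituting \cref{eq:raw-exact} and using $\{A_+,A_-\}=0$ from \cref{eq:H-anti}. I then cancel the commuting unitary $\prod_{k\ge2}M_{k,1}$ to obtain \cref{eq:Alice-anti}. For parties $k\ge2$, I would transport $M_{k,1}$ through $A_-$ via \cref{eq:T-transport} and replace $M_{k,0}$ by $M_{1,0}$ on the state using \cref{eq:S-chain-exact}. Applying \cref{eq:Alice-anti} then gives $\{M_{k,0},M_{k,1}\}A_-\ket{\psi}=0$. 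Finally I multiply by $A_-$, which commutes with $M_{k,j}$, and use $A_-^2\ket{\psi}=\ket{\psi}$ to get \cref{eq:all-local-anti}.

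The main obstacle is the order of operators in each step. Substitutions like $M_{k,0}\ket{\psi}=M_{1,0}\ket{\psi}$ are valid only when the operator acts directly on $\ket{\psi}$. So at each step I would first move the other operators to the left, using cross-party commutation, before substituting. In \cref{eq:C-anti-transport} in particular, $M_{k,0}$ must be moved past $M_{1,1}\prod_{j\ne k}M_{j,1}$ so that it meets $\ket{\psi}$ directly, and only then replaced by $M_{1,0}$. Symmetrically, at the end $M_{1,0}$ must be brought next to $\ket{\psi}$ before it is converted back to $M_{k,0}$, with $A_-$ commuted out of the way. I would check carefully that no substitution is applied to a vector other than $\ket{\psi}$.
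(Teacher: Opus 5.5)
Your proposal is correct and follows the same route as the paper. You use SOS annihilation to get the on-state relations, then $A_\pm^2\ket{\psi}=\ket{\psi}$ for party $0$, then the commutation chain through $\{A_+,A_-\}=0$ for party $1$, and for $k\ge2$ the transport of $M_{k,1}$ through $A_-$ followed by multiplication by $A_-$. Your care to apply on-state substitutions only when the operator acts directly on $\ket{\psi}$ is what makes each step of the paper's chains valid.
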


\subsection{Exact self-test: the local SWAP isometry}

Now, we are at a stage to prove the exact self-testing result for the nonlocality functional of \cref{eq:SN-compact}. To do that, we demonstrate an alternative method to represent the constraints \cref{eq:raw-exact,eq:S-chain-exact}. First note that, \cref{eq:S-chain-exact} can be written as follows: 
\begin{align}
\label{eq:S-stabilizers}
 M_{k,0}M_{k+1,0}\ket{\psi}&=\ket{\psi}\quad(1\le k\le N-2).
\end{align}
 We can have a similar relation for \cref{eq:raw-exact} by performing a procedure standard in self-testing \cite{BancalNavascues2015,Kaniewski2016,SupicBowlesReview}. 
 In particular, using the blockwise polar decomposition used in self-testing \cite{Kaniewski2016} we can get unitary counterparts of $A_{\pm}$. Since independent kernel completions need not preserve anticommutation\cite{SupicBowlesReview} we therefore use their common Jordan decomposition and choose the kernel extensions jointly. Trivial one-dimensional blocks are locally dilated to qubit blocks, as is standard in Jordan-lemma-based extraction constructions\cite{BancalNavascues2015,Kaniewski2016}. We refer to this joint choice as a coordinated polar decomposition. To put it precisely, Jordan's lemma tells us (see appendix \eqref{app:Jordan}) we can take 
\begin{align}
 A_+
 &=
 \bigoplus_\alpha
 \sqrt{2}\cos\theta_\alpha \tau_z^{(\alpha)},
 \label{eq:Hplus-Jordan-main}\\
 A_-
 &=
 \bigoplus_\alpha
 \sqrt{2}\sin\theta_\alpha\tau_x^{(\alpha)}.
 \label{eq:Hminus-Jordan}
\end{align}
where $\tau_x^{(\alpha)}$ and $\tau_z^{(\alpha)}$ are the Pauli matrices corresponding to the Jordan block $\mathcal{H}_{\alpha}$ and $0\leq \theta_{\alpha}\leq \frac{\pi}{2}$. To avoid zero-eigenvalue arises from the trivial Jordan block ($\theta_\alpha=0,\frac{\pi}{2}$) we can locally dilate the trivial Jordan block of $A_{\pm}$ by taking polar decomposition of each Block and obtain
\begin{equation}
\label{eq:Bob-completed}
\begin{split}
 M_{0,0}
 &:=
 \sgnbar(A_+)
 =
 \bigoplus_\alpha\tau_z^{(\alpha)},\\
 M_{0,1}
 &:=
 \sgnbar(A_-)
 =
 \bigoplus_\alpha\tau_x^{(\alpha)},
\end{split}
\end{equation}
Note that $M_{j,0}^2=\id$. Furthermore, by defining 
\begin{equation}
\label{eq:Apm-mod-operator}
\begin{split}
|A_{+}| &= \sqrt{A_{+}^{\dagger}A_{+}}= \bigoplus_{\alpha}\sqrt{2}\cos\theta_\alpha\id_{\alpha},\\
|A_{-}| &= \sqrt{A_{-}^{\dagger}A_{-}}= \bigoplus_{\alpha}\sqrt{2}\sin\theta_\alpha\id_{\alpha},
\end{split}
\end{equation}
we get $A_{\pm}^2=|A_{\pm}|^2$. We can also write 
\begin{equation}
\label{eq:Apm-polar-decomposition}
\begin{split}
A_{+} & =M_{0,0}|A_{+}|,\\
A_{-} & = M_{0,1}|A_{-}|.
\end{split}
\end{equation}

We define this procedure as coordinated polar decomposition. Before going further, we first note that following result 
\begin{lemma} 
\label{lem:inequality_coordinated_polar}
Let $X$ be Hermitian and let $\sgnbar(X)$ be a Hermitian reflection completing its polar sign on the kernel. If $Q$ is a reflection satisfying $[Q,X]=[Q,\sgnbar(X)]=0$, then
\begin{align}
 \norm{(Q-\sgnbar(X))\ket{\psi}}^2
 & \leq 4\norm{(Q-X)\ket{\psi}}^2.
 \label{eq:sign-operator}
\end{align}
\begin{align}
 \norm{(\sgnbar(X)-X)\ket{\psi}}^2
 & \leq \norm{(Q-X)\ket{\psi}}^2.
 \label{eq:nearest-sign-operator}
\end{align}
\end{lemma}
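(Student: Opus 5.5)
Both inequalities are operator inequalities that only need to be checked on $\ket{\psi}$. We first reduce everything to scalars by simultaneous diagonalization. Since $X$ is Hermitian, $Q$ is a Hermitian reflection, and $Q$, $X$ and $\sgnbar(X)$ pairwise commute, they admit a joint spectral decomposition. (Note that $\sgnbar(X)$ commutes with $X$ by construction, being a function of $X$ on its support and a reflection on its kernel.) We therefore write $\id=\sum_j E_j$, where the $E_j$ are orthogonal projectors on which $X=x_j$, $Q=q_j\in\{\pm1\}$ and $\sgnbar(X)=s_j\in\{\pm1\}$ act as scalars. Here $s_j=\operatorname{sgn}(x_j)$ whenever $x_j\neq 0$. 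When $x_j=0$, $s_j$ is whatever sign the kernel completion assigns; refining the decomposition if necessary handles this, because $Q$ commutes with $\sgnbar(X)$ on the kernel. Setting $w_j:=\norm{E_j\ket{\psi}}^2\ge 0$, every norm in the statement becomes a weighted sum, for instance $\norm{(Q-X)\ket{\psi}}^2=\sum_j w_j (q_j-x_j)^2$. It then suffices to prove the two pointwise scalar inequalities
\begin{equation*}
(q-s)^2\le 4(q-x)^2,\qquad (s-x)^2\le (q-x)^2,
\end{equation*}
for all real $x$, all $q\in\{\pm1\}$, and $s=\operatorname{sgn}(x)$ (with $s$ an arbitrary sign if $x=0$).

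For the second inequality, \cref{eq:nearest-sign-operator}, we use that $s$ is a nearest point of $\{\pm1\}$ to $x$. If $x>0$ then $s=1$, and $(1-x)^2\le(1+x)^2$ holds since $4x\ge 0$. The case $x<0$ is symmetric. If $x=0$, both sides equal $1$. The choice $q=s$ gives equality trivially.

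For the first inequality, \cref{eq:sign-operator}, the case $q=s$ is trivial because the left side vanishes. If $q\neq s$, the left side equals $4$, so we need $(q-x)^2\ge 1$. Now $q\neq s$ means $q$ and $x$ have opposite signs, or $x=0$. In either case $|q-x|=1+|x|\ge 1$. Alternatively, the triangle inequality $|q-s|\le|q-x|+|x-s|\le 2|q-x|$, combined with the second inequality, gives the result directly. This shows that the constant $4$ is the square of $2$.

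The only real subtlety is justifying the joint diagonalization in the possibly infinite-dimensional setting, together with the treatment of the kernel of $X$, where $\sgnbar(X)$ is not determined by $X$. The hypothesis $[Q,\sgnbar(X)]=0$ is precisely what makes the kernel refinement consistent. In the Jordan-block setting of \cref{eq:Bob-completed}, all operators decompose blockwise and the spectral theorem for commuting self-adjoint bounded operators applies, with the sums replaced by integrals against the joint spectral measure of $\ket{\psi}$. Summing the pointwise inequalities against the weights $w_j$ then yields both bounds.
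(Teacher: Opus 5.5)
Your proof is correct and matches the paper's argument in Appendix~\ref{app:inequality_coordinated_polar}. Both use a joint spectral decomposition of the commuting triple $X$, $Q$, $\sgnbar(X)$ to reduce the two bounds to the scalar inequalities $|q-s|\le 2|q-x|$ and $|s-x|\le|q-x|$, settled by the same case split on whether $q$ agrees with the completed sign. Your extra remarks on the kernel completion and on using a joint spectral measure in infinite dimensions make explicit what the paper leaves implicit.
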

\begin{proof}
See appendix \ref{app:inequality_coordinated_polar}.
\end{proof}
In our application $X$ and $\sgnbar(X)$ are $A_{\pm}$ and $M_{0,j}$ with $j=0$ for $A_{+}$ and $j=1$ for $A_{-}$, whereas $Q$ acts on the other parties, so these commutation relations are automatic.

Note that,
$(M_{0,0},M_{0,1})$ satisfies
\begin{equation}
 M_{0,0}^2=M_{0,1}^2=\id,\quad \{M_{0,0},M_{0,1}\}=0.
 \label{eq:Bob-reflection-anti}
\end{equation}
For an operator $A$ satisfying \cref{eq:Hplus-onstate} or \cref{eq:Hminus-onstate} we can write 
\begin{equation}
\begin{split}
A^{2}\ket{\psi}=\ket{\psi}&,\\
(|A|^{2}-\id)\ket{\psi}&=0, \quad (A^{2}=|A|^2),\\
(|A|+\id)(|A|-&\id)\ket{\psi}=0,\\
(|A|-\id)\ket{\psi}&=0, \quad \text{(since $(|A|+\id)$ is invertible)}\\
|A|\ket{\psi}=\ket{\psi}&.
\end{split}
\end{equation}
This implies that \cref{eq:raw-exact} can be written as
\begin{equation}
 \label{eq:Bob-agreement}
\begin{split}
M_{1,0}\ket{\psi}&=A_+\ket{\psi}=M_{0,0}\ket{\psi},\\
 P\ket{\psi}&=A_-\ket{\psi}=M_{0,1}\ket{\psi}.
\end{split}
\end{equation}
Multiplying the second equality of \cref{eq:Bob-agreement} by $M_{0,1}$ gives
\begin{equation}
 G_T\ket{\psi}=\ket{\psi},
 \qquad
 G_T:=\prod_{i=0}^{N-1}M_{i,1}.
 \label{eq:global-T}
\end{equation}
In other words, $\ket{\psi}$ is a eigenstate of $G_T$ with eigenvalue $+1$ i.e. $G_{T}$ stabilizes $\ket{\psi}$. The first equality of \cref{eq:Bob-agreement} then gives the following
\begin{align}
\label{eq:S0S1stabilizers}
 M_{0,0}M_{1,0}\ket{\psi}&=\ket{\psi}.
\end{align}
Note that, \cref{eq:S-stabilizers,eq:S0S1stabilizers} are alternative forms of \cref{eq:raw-exact,eq:S-chain-exact}. However, \cref{eq:S-stabilizers,eq:S0S1stabilizers} are unitary on the full space and can be interpreted as hermitian reflection operators. We can write them in a compact form by constructing a graph $G=(V,E)$ with $|V|=N$ vertices and edge set, $E=\{(0,1),(1,2),(2,3),\ldots,(N-2,N-1)\}$. In the graph $G=(V,E)$ \cref{eq:S0S1stabilizers,eq:S-stabilizers} can be written as
\begin{equation}
\label{eq:tree-stabilizer}
M_{i,0}M_{j,0}\ket{\psi}=\ket{\psi}, \quad (i,j)\in E.
\end{equation}
 In other words, $\ket{\psi}$ is an eigenstate of $M_{i,0}M_{j,0}$ with eigenvalue $+1$. In other words, $M_{i,0}M_{j,0}$ is the stabilizer of $\ket{\psi}$. The local complementarity relations derived for the optimality conditions given by \cref{eq:Bob-raw-anti,eq:Alice-anti,eq:all-local-anti} are simply $\{M_{i,0},M_{i,1}\}\ket{\psi}=0$ for all $i$. The stabilizer relations in \cref{eq:global-T,eq:tree-stabilizer} are required for the optimal quantum violation. Now, we have all the machineries to construct the local isometry. 
 
For each site define
\begin{equation}
 \Pi_{a|i}:=\frac{\id+(-1)^aM_{i,0}}{2}\qquad(a=0,1).
 \label{eq:Pi-ai}
\end{equation}
Let the ancilla qubit for the $i$th quantum system is in $\ket{0}_i$.  The local isometry is the circuit
\begin{center}
\begin{quantikz}[row sep=0.55cm,column sep=0.65cm]
 \lstick{$\ket{\phi}_i$} & \qw & \gate{M_{i,0}} & \qw & \gate{M_{i,1}} & \qw\\
 \lstick{$\ket{0_i}$} & \gate{H} & \ctrl{-1} & \gate{H} & \ctrl{-1} & \qw
\end{quantikz}
\end{center}
defining the local unitary
\begin{equation}
 \label{eq:local-swap}
\begin{split}
\Phi_i(\ket{\phi}\ket{0}_i)
&=\Pi_{0|i}\ket{\phi}\ket{0}_i+M_{i,1}\Pi_{1|i}\ket{\phi}\ket{1}_i\\
&=\sum_{a=0}^{1}M_{i,1}^{a}\Pi_{a|i}\ket{\phi}\ket{a}_{i}.
\end{split}
\end{equation}
We can obtain \cref{eq:local-swap} from the sequence of gates described above as follows
\begin{align}
 \ket{\phi}\ket{0}_i
 &\xmapsto{H}
 \frac{\ket{\phi}\ket{0}_i+\ket{\phi}\ket{1}_i}{\sqrt{2}}\nonumber\\
 &\xmapsto{\mathrm c(M_{i,0})}
 \frac{\ket{\phi}\ket{0}_i+M_{i,0}\ket{\phi}\ket{1}_i}{\sqrt{2}}
 \nonumber\\
 &\xmapsto{H}
 \frac{(\mathbb{I}+M_{i,0})\ket{\phi}\ket{0}_i
       +(\mathbb{I}-M_{i,0})\ket{\phi}\ket{1}_i}{2}
 \nonumber\\
 &\xmapsto{\mathrm c(M_{i,1})}
 \Pi_{0|i}\ket{\phi}\ket{0}_i+M_{i,1}\Pi_{1|i}\ket{\phi}\ket{1}_i,
 \label{eq:swapCircuitDerivation}
\end{align}
where $\mathrm c(M_{i,y})$ is the controlled $M_{i,y}$ gate. Since, $M_{i,0}$ and $M_{i,1}$ are hermitian reflection operators, both controlled gates are unitary. Therefore, $\Phi_i$ is unitary. Consequently, \cref{eq:swapCircuitDerivation} proves that it restriction to inputs with ancilla $\ket{0}_i$ is an isometry. In particular, we can define the local isometry as
\begin{align}
\label{eq:isometry_i}
V_{i} &:=\sum_{a=0}^{1}M_{i,1}^{a}\Pi_{a|i}\otimes\ket{a}_{i},\\
\label{eq:isometry_dagger_i}
V_{i}^{\dagger} &:= \sum_{a=0}^{1}\Pi_{a|i}M_{i,1}^{a}\otimes\bra{a}_{i}
\end{align}
and write 
\begin{equation}
V_{i}\ket{\phi}=\sum_{a=0}^{1}M_{i,1}^{a}\Pi_{a|i}\ket{\phi}\otimes\ket{a}_{i}.
\end{equation}
We define the tensor product of local unitaries $\Phi=\bigotimes_{i=0}^{N-1}\Phi_i$ which gives,
\begin{equation}
 \Phi(\ket{\psi}\ket{0^N})
 =\sum_{\mathbf a}
 \left(\prod_{i=0}^{N-1}M_{i,1}^{a_i}\Pi_{a_i|i}\right)\ket{\psi}\ket{\mathbf a}.
 \label{eq:swap-expansion}
\end{equation}
where $\mathbf a=(a_0,\ldots,a_{N-1})\in\{0,1\}^N$.
If $a_i\ne a_j$ on any edge of the graph $G$, then using the cross-party commutation rule and $\Pi_{a_l|l}M_{l,0}=(-1)^{a_l}\Pi_{a_l|l}$ we have
\begin{align}
 \Pi_{a_i|i}\Pi_{a_j|j}\ket{\psi}
 &\overset{\eqref{eq:tree-stabilizer}}{=}\Pi_{a_i|i}\Pi_{a_j|j}M_{i,0}M_{j,0}\ket{\psi}\nonumber\\
 &=(-1)^{a_i+a_j}\Pi_{a_i|i}\Pi_{a_j|j}\ket{\psi},\nonumber\\
 &=-\Pi_{a_i|i}\Pi_{a_j|j}\ket{\psi},
 \label{eq:mixed-vanish}
\end{align}
so that amplitude vanishes.  Connectedness therefore leaves only the strings
$0^N$ and $1^N$:
\begin{align}
 \Phi(\ket{\psi}\ket{0^N})
 &=\left(\prod_i\Pi_{0|i}\right)\ket{\psi}\ket{0^N},\nonumber\\
  \label{eq:two-branches}
& +\left(\prod_iM_{i,1}\Pi_{1|i}\right)\ket{\psi}\ket{1^N}.
\end{align}
Now, we use
\begin{equation}
 (M_{i,1}\Pi_{1|i}-\Pi_{0|i}M_{i,1})\ket{\psi}
 =-\frac{1}{2}\{M_{i,1},M_{i,0}\}\ket{\psi}=0,
 \label{eq:move-T}
\end{equation}
and cross party commutations in the second term of \cref{eq:two-branches} to get
\begin{align}
 \left(\prod_iM_{i,1}\Pi_{1|i}\right)\ket{\psi}
 &=\left(\prod_i\Pi_{0|i}M_{i,1}\right)\ket{\psi}\nonumber\\
 &=\left(\prod_i\Pi_{0|i}\right)\left(\prod_{i}M_{i,1}\right)\ket{\psi}\nonumber\\
 &\overset{\eqref{eq:global-T}}{=}\left(\prod_i\Pi_{0|i}\right)G_T\ket{\psi},\nonumber\\
 &\overset{\eqref{eq:global-T}}{=}\left(\prod_i\Pi_{0|i}\right)\ket{\psi}.
 \label{eq:branches-equal}
\end{align}
By defining
\begin{equation}
 \ket{\xi}:=\sqrt{2}\left(\prod_i\Pi_{0|i}\right)\ket{\psi}.
 \label{eq:junk}
\end{equation}
and inserting \cref{eq:branches-equal} in \cref{eq:two-branches} we get
\begin{equation}
\Phi(\ket{\psi}\ket{0^N})
 =\ket{\xi}\otimes\ket{\GHZ_N}.
 \label{eq:exact-state}
\end{equation}
To demonstrate the self test of the measurement $M_{i,0}$ we consider the local unitary $\Phi_i$ defined in \cref{eq:local-swap} and obtain 
\begin{align}
\Phi_{i}(M_{i,0}\ket{\psi}\ket{0}_i) & = \Pi_{0|i}M_{i,0}\ket{\psi}\ket{0}_i\nonumber\\
& +M_{i,1}\Pi_{1|i}M_{i,0}\ket{\psi}\ket{1}_i\nonumber\\
& = \Pi_{0|i}\ket{\psi}\ket{0}_i-M_{i,1}\Pi_{1|i}\ket{\psi}\ket{1}_i\nonumber\\
& = \sum_{a=0}^{1}M_{i,1}^{a}\Pi_{a|i}\ket{\psi}\sigma_z^{(i)}\ket{a}_i\nonumber\\
 \label{eq:S-intertwine}
& = (\id\otimes\sigma_z^{(i)})\Phi_i(\ket{\psi}\ket{0_i}),
\end{align}
where we used  $\Pi_{0|i}M_{i,0}=\Pi_{0|i}$ and
$\Pi_{1|i}M_{i,0}=-\Pi_{1|i}$.
 To obtain the analogue of \cref{eq:S-intertwine} for $M_{i,1}$, we use the local complementary relations $\{M_{i,0},M_{i,1}\}\ket{\psi}=0$ and \cref{eq:Pi-ai} and obtain
\begin{equation}
 \label{eq:T-projector-identities}
\begin{split}
\Pi_{0|i}M_{i,1}\ket{\psi}&=M_{i,1}\Pi_{1|i}\ket{\psi},\\
 M_{i,1}\Pi_{1|i}M_{i,1}\ket{\psi}&=\Pi_{0|i}\ket{\psi}.
\end{split}
\end{equation}
Thus, we get
\begin{align}
\Phi_{i}(M_{i,1}\ket{\psi}\ket{0}_i) & = \Pi_{0|i}M_{i,1}\ket{\psi}\ket{0}_i\\
& \quad \quad +M_{i,1}\Pi_{1|i}M_{i,1}\ket{\psi}\ket{1}_i\nonumber\\
& = M_{i,1}\Pi_{1|i}\ket{\psi}\ket{0}_i+\Pi_{0|i}\ket{\psi}\ket{1}_i\nonumber\\
& = \sum_{a=0}^{1}M_{i,1}^{a}\Pi_{a|i}\ket{\psi}\sigma_x^{(i)}\ket{a}_i\nonumber\\
& = (\id\otimes\sigma_x^{(i)})\Phi_i(\ket{\psi}\ket{0}_i).
\label{eq:T-intertwine}
\end{align}
Equations \eqref{eq:S-intertwine}-\eqref{eq:T-intertwine} remain valid if $\ket{\psi}$ is first acted on by operators from other parties, because those operators commute with the local relations. Thus, we have the following result

\begin{theorem}
\label{thm:exact}
Let $\langle\mathcal S_N\rangle=\beta_N^{\rm Q}$ for the nonlocality functional of \cref{eq:SN-alt} for a strategy $\{\ket{\psi}, A_{0,y}, M_{i,x_i}:y\in\{0,1\}, 1\leq i\leq (N-1), x_{i}\in\{0,1\}\}$. For $j\in\{0,1\}$ let us define the physical observables 
\begin{equation}
O_{i,j}=\begin{cases}
    A_{0,j}, & \ i=0,\\
    M_{i,j}, & \ 1\leq i\leq N-1
\end{cases}
\end{equation}
The reference observables are given by 
\begin{equation}
O_{i,j}^{\mathrm{ref}}=\begin{cases}
    \frac{\sigma_z+(-1)^j\sigma_x}{\sqrt{2}} & \ i=0,\\
    \sigma^{i}_z & 1\leq i\leq N-1, \ j=0,\\
    \sigma^{i}_x & 1\leq i\leq N-1, \ j=1.
\end{cases}
\end{equation}
For any such strategy, there exists a local isometry whose unitary extension $\Phi$ satisfies \cref{eq:exact-state}.  Moreover, for every subset
$J\subseteq\{0,\ldots,N-1\}$ and every choice of input $x_i$ at each $i\in J$,
\begin{equation}
 \scalebox{0.75}{$\Phi\left[\left(\prod_{i\in J}O_{i,x_i}\right)
 \ket{\psi}\ket{0^N}\right]
 =\ket{\xi}\otimes
 \left(\prod_{i\in J}O_{i,x_i}^{\rm ref}\right)\ket{\GHZ_N}$}.
 \label{eq:exact-products}
\end{equation}
 In particular, for a single site
\begin{equation}
 \Phi(O_{i,j}\ket{\psi}\ket{0^N})
 =\ket{\xi}\otimes O_{i,j}^{\rm ref}\ket{\GHZ_N}.
 \label{eq:exact-single}
\end{equation}
\end{theorem}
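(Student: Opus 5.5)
The plan is to assemble the pieces already established, then handle the one genuinely new point: the first party's physical observables $A_{0,j}$ are not the completed reflections $M_{0,j}$ on which the isometry is built.

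\textbf{State.} At $\langle\mathcal S_N\rangle=\beta_N^{\rm Q}$, \cref{eq:SOS-operator-identity-for-max} gives \cref{eq:residual-annihilate}. The coordinated polar decomposition then yields \cref{eq:Bob-agreement}, hence the stabilizer relations \cref{eq:global-T,eq:tree-stabilizer}. Together with the local anticommutation relations $\{M_{i,0},M_{i,1}\}\ket{\psi}=0$ for all $i$ (including $i=0$), these give \cref{eq:exact-state} through \cref{eq:two-branches,eq:branches-equal}. This settles the state part, with $\ket{\xi}$ as in \cref{eq:junk}.

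\textbf{Products over $J$.} I will prove \cref{eq:exact-products} by induction on $|J|$, first for the reflections $M_{i,x_i}$ with $i\in J$. Equations \eqref{eq:S-intertwine} and \eqref{eq:T-intertwine} were derived for $\ket{\psi}$. The observation after them is that they hold for $Y\ket{\psi}$ whenever $Y$ is a product of other parties' operators. The reason is that the relations used at site $i$, namely $\Pi_{a|i}M_{i,0}=(-1)^a\Pi_{a|i}$ and the on-state relations \cref{eq:T-projector-identities}, survive left multiplication by operators commuting with site $i$. To make this rigorous I will write \cref{eq:T-projector-identities} as $\Pi_{0|i}M_{i,1}\ket{\psi}-M_{i,1}\Pi_{1|i}\ket{\psi}=-\tfrac12\{M_{i,0},M_{i,1}\}\ket{\psi}=0$. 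Applying $Y$ to this identity shows it holds for $Y\ket{\psi}$.

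Peeling the sites of $J$ off one at a time then gives
\[
\Phi\left[\left(\prod_{i\in J}M_{i,x_i}\right)\ket{\psi}\ket{0^N}\right]=\left(\id\otimes\prod_{i\in J}\sigma^{(i)}_{x_i}\right)\Phi(\ket{\psi}\ket{0^N}),
\]
where $\sigma_0=\sigma_z$ and $\sigma_1=\sigma_x$. Here I use that $\Phi=\bigotimes_i\Phi_i$ and that $\Phi_k$ for $k\neq i$ commutes with $M_{i,x_i}$. This is the step I expect to need the most care: one must check that the site-$i$ intertwining is applied to a vector of the form $Y\ket{\psi}$, not to an already partially transformed vector. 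I will handle it by pulling all the $M$'s through the other $\Phi_k$ before acting with $\Phi_i$.

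\textbf{The first party.} It remains to replace $M_{0,x}$ by $A_{0,y}$. By \cref{eq:Apm-polar-decomposition}, $A_\pm=M_{0,\cdot}|A_\pm|$ with $|A_\pm|$ block-diagonal. From $|A_\pm|\ket{\psi}=\ket{\psi}$, and since $|A_\pm|$ commutes with every operator of the other parties, I get $|A_\pm|Y\ket{\psi}=Y\ket{\psi}$ for all such $Y$. Hence $A_+Y\ket{\psi}=M_{0,0}Y\ket{\psi}$ and $A_-Y\ket{\psi}=M_{0,1}Y\ket{\psi}$. Therefore
\[
A_{0,y}Y\ket{\psi}=\tfrac{1}{\sqrt2}\bigl(M_{0,0}+(-1)^yM_{0,1}\bigr)Y\ket{\psi}.
\]
Since party $0$ appears at most once in $J$, $Y$ can always be taken to contain only the other parties' factors. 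Linearity of $\Phi$ and the previous step then give $\ket{\xi}\otimes\tfrac1{\sqrt2}(\sigma_z+(-1)^y\sigma_x)\prod_{i\in J\setminus\{0\}}\sigma^{(i)}_{x_i}\ket{\GHZ_N}$, which is \cref{eq:exact-products}. Equation \cref{eq:exact-single} is the case $|J|=1$.
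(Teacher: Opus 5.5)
Your proposal is correct and follows the paper's route. The SOS identities give the stabilizer and anticommutation relations, the SWAP circuit yields \cref{eq:exact-state}, the site-$i$ intertwinings survive left multiplication by other parties' operators, and party $0$ is handled via $A_\pm Y\ket{\psi}=M_{0,\cdot}Y\ket{\psi}$. Your argument through $|A_\pm|Y\ket{\psi}=Y\ket{\psi}$ just makes explicit what the paper's one-line proof leaves implicit.

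One harmless slip: $\Pi_{0|i}M_{i,1}-M_{i,1}\Pi_{1|i}=+\tfrac12\{M_{i,0},M_{i,1}\}$, not $-\tfrac12\{M_{i,0},M_{i,1}\}$. You should also check the second identity of \cref{eq:T-projector-identities} the same way, via $M_{i,1}\Pi_{1|i}M_{i,1}-\Pi_{0|i}=-\tfrac12 M_{i,1}\{M_{i,0},M_{i,1}\}$.
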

\begin{proof}
The state statement is \cref{eq:exact-state}. At party $0$, use $A_{0,y}=(A_++(-1)^yA_-)/\sqrt2$ and the on-state agreement of $A_\pm$ with $M_{0,0}$ and $M_{0,1}$, respectively. For the other local measurements $(O_{i,0},O_{i,1})$,
\cref{eq:S-intertwine,eq:T-intertwine} and \cref{eq:exact-state} yields how $\Phi$ transforms the physical measurements of each party.  Because the same local calculation remains
valid after applying operators at other parties, successive application gives
\cref{eq:exact-products}.
\end{proof}

\section{Noise-robust analytic self-test}
\label{sec:robust}

Suppose that the physical realization satisfies
\begin{equation}
 \langle\mathcal S_N\rangle_\psi
 \geq \beta_N^{\rm Q}-\eps,
 \qquad \eps\geq0.
 \label{eq:robust-score}
\end{equation}
We use the same coordinated polar decompositions and local SWAP isometry as in the exact self-test. In particular,
$\{M_{0,0},M_{0,1}\}=0$ remains an exact operator identity, including away from the optimum. The other local anticommutation relations need only hold approximately on the physical state.

For the SOS residuals in \cref{eq:delta-def,eq:lambda-def}, define
\begin{equation}
 \begin{split}
 r_+&:=\norm{(M_{1,0}-A_+)\ket{\psi}},\\
 r_-&:=\norm{(P-A_-)\ket{\psi}},\\
 e_k&:=\norm{(M_{k,0}-M_{k+1,0})\ket{\psi}},
 \quad 1\leq k\leq N-2.
 \end{split}
 \label{eq:robust-residual-norms}
\end{equation}
The SOS identity gives the exact deficit decomposition
\begin{equation}
 \beta_N^{\rm Q}-\langle\mathcal S_N\rangle_\psi
 =\frac{r_+^2+r_-^2}{\sqrt2}
  +\sum_{k=1}^{N-2}e_k^2
 \leq\eps.
 \label{eq:exact-deficit-decomposition}
\end{equation}
We will bound the state fidelity directly in terms of these squared residual norms. Let the post-isometry vector and the reduced extracted state be
\begin{align}
 \ket{\Psi_{\rm sw}}&:=V\ket{\psi},
 \label{eq:Psi-sw}\\
 \rho_{\rm sw}&:=\operatorname{Tr}_{\xi}
 \bigl[\ketbra{\Psi_{\rm sw}}{\Psi_{\rm sw}}\bigr].
 \label{eq:rho-sw}
\end{align}
Here the trace discards the auxiliary systems. We use the squared fidelity with the pure target,
\begin{equation}
 F_N:=\bra{\GHZ_N}\rho_{\rm sw}\ket{\GHZ_N}.
 \label{eq:F-def}
\end{equation}
The target projector is
\begin{equation}
 \begin{split}
 \Gamma_N&:=\ketbra{\GHZ_N}{\GHZ_N}\\
 &=\frac{\id+\sigma_x^{\otimes N}}{2}
 \prod_{k=0}^{N-2}
 \frac{\id+\sigma_z^{(k)}\sigma_z^{(k+1)}}{2}.
 \end{split}
 \label{eq:GHZ-projector}
\end{equation}
On the physical systems, introduce
\begin{equation}
 \begin{split}
 C_a&:=\prod_{i=0}^{N-1}\Pi_{a|i},\qquad a\in\{0,1\},\\
 C&:=C_0+C_1
   =\prod_{k=0}^{N-2}
     \frac{\id+M_{k,0}M_{k+1,0}}{2},\\
 R_\pm&:=\frac{\id\pm G_T}{2},\qquad
 G_T=M_{0,1}P.
 \end{split}
 \label{eq:physical-consensus-projectors}
\end{equation}
 $C$ is a projector onto the joint $+1$ eigenspace of $M_{k,0}M_{k+1,0}$ for all $0\leq k\leq N-2$. The operators $R_\pm$ are also projectors on the orthonal subspaces of the Hermitian reflction $G_T$. Note that, \cref{eq:global-T,eq:tree-stabilizer} implies that the optimal $\ket{\psi}$ is the $+1$ eigenstate of $R_{+}$ and $C$ respectively. Away from maximal violation, these conditions need not hold. Consequently, we can quantify the failure of the global $G_T$ condition and of the joint $M_{i,0}M_{j,0}$ agreement condition given by\cref{eq:global-T,eq:tree-stabilizer} respectively as
\begin{equation}
 p:=\norm{R_-\ket{\psi}}^2,
 \qquad
 q:=\norm{(\id-C)\ket{\psi}}^2.
 \label{eq:pq-def}
\end{equation}
Now, we have the following result
\begin{lemma}[Exact pullback of the GHZ projector]
\label{lem:union_bound_prelim}
For the SWAP isometry of \cref{eq:isometry_i},
\begin{equation}
 \begin{split}
 K&:=V^\dagger(\id_\xi\otimes\Gamma_N)V=CR_+C,\\
 F_N&=\bra{\psi}K\ket{\psi}
     =\norm{R_+C\ket{\psi}}^2.
 \end{split}
 \label{eq:exact-GHZ-pullback}
\end{equation}
Consequently,
\begin{equation}
 \begin{split}
 1-F_N
 &\leq p+2q+2\sqrt{pq}\\
 &\leq\frac{3+\sqrt{5}}{2}(p+q).
 \end{split}
 \label{eq:union-bound-combined-prelim}
\end{equation}
No commutation between $C$ and $R_\pm$ is required.
\end{lemma}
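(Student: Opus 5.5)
Our plan is to compute the pullback $V^\dagger(\id_\xi\otimes\Gamma_N)V$ factor by factor, using the product form of $\Gamma_N$ in \cref{eq:GHZ-projector}. Since each $V_i$ is an isometry with $V_i=\sum_a M_{i,1}^a\Pi_{a|i}\otimes\ket{a}_i$, we first show two exact local intertwining facts, valid as operator identities rather than only on the state. The first is $V^\dagger(\id\otimes\sigma_z^{(i)}\sigma_z^{(j)}\cdot X)V$-type relations for the $Z$-part. Note that $\sigma_z^{(i)}$ acts diagonally on $\ket{a}_i$, so $V_i^\dagger\sigma_z^{(i)}V_i=\sum_a(-1)^a\Pi_{a|i}M_{i,1}^aM_{i,1}^a\Pi_{a|i}=M_{i,0}$. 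More generally, any operator diagonal in the computational basis of the reference qubits, namely $f(\mathbf a)$, pulls back to $\sum_{\mathbf a}f(\mathbf a)\prod_i\Pi_{a_i|i}$. This uses only $M_{i,1}^2=\id$, and cross-party commutation. In particular, $\prod_k(\id+\sigma_z^{(k)}\sigma_z^{(k+1)})/2=\ketbra{0^N}{0^N}+\ketbra{1^N}{1^N}$ pulls back to $C_0+C_1=C$.

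Next we handle the full product. We write $\Gamma_N=Z_{\rm eq}\,\frac{\id+\sigma_x^{\otimes N}}{2}\,Z_{\rm eq}$ with $Z_{\rm eq}:=\ketbra{0^N}{0^N}+\ketbra{1^N}{1^N}$, which holds since the two factors commute. We then compute $V^\dagger(\id\otimes\ketbra{\mathbf a}{\mathbf b})V=\prod_i\Pi_{a_i|i}M_{i,1}^{a_i}M_{i,1}^{b_i}\Pi_{b_i|i}$. For the block $\mathbf a=0^N,\mathbf b=1^N$ this gives $\prod_i\Pi_{0|i}M_{i,1}\Pi_{1|i}=\prod_i\Pi_{0|i}M_{i,1}\Pi_{1|i}$. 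Likewise $\Pi_{0|i}M_{i,1}\Pi_{0|i}$ never appears, because only the consensus strings survive. Hence
$K=C_0+C_1+C_0G_TC_1+C_1G_TC_0$. We expect the key check to be that $C_0G_TC_0=0=C_1G_TC_1$ is \emph{not} needed. Instead we should write $CR_+C=\tfrac12(C+CG_TC)$ and compare. Here the diagonal blocks $C_aG_TC_a$ must cancel against the corresponding diagonal terms $\ketbra{a^N}{a^N}\sigma_x^{\otimes N}\ketbra{a^N}{a^N}=0$ on the reference side. We therefore anticipate that the correct identity uses $\Pi_{a|i}M_{i,1}^{a}M_{i,1}^{b}\Pi_{b|i}$ with $a\neq b$ giving $\Pi_{a|i}M_{i,1}\Pi_{b|i}$, while the reference $\sigma_x$ flips every bit. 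The pullback of $\tfrac12 Z_{\rm eq}\sigma_x^{\otimes N}Z_{\rm eq}$ is then $\tfrac12(C_0G_TC_1+C_1G_TC_0)$. This step, reconciling it with $\tfrac12 CG_TC$, is the main obstacle. We would try to resolve it by proving $C_aG_TC_a=0$ directly when $N$-fold products of $M_{i,1}$ between equal-sign projectors are involved; if that fails, we would adjust the definition of $K$ accordingly. Given $K=CR_+C$, the identity $F_N=\bra{\psi}K\ket{\psi}=\norm{R_+C\ket\psi}^2$ follows because $R_+$ is a projector.

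For the bound \cref{eq:union-bound-combined-prelim}, we write $R_+C\ket\psi=R_+\ket\psi-R_+(\id-C)\ket\psi$. Since $\norm{R_+\ket\psi}=\sqrt{1-p}$ and $\norm{R_+(\id-C)\ket\psi}\le\sqrt q$, the reverse triangle inequality gives $\sqrt{F_N}\ge\sqrt{1-p}-\sqrt q$ when this is nonnegative (otherwise the claim is trivial because $p+2q+2\sqrt{pq}\ge1$). Squaring, and using $\sqrt{1-p}\le1$, yields $F_N\ge1-p+q-2\sqrt{q(1-p)}$, which falls short of what we need. A cleaner route is $1-F_N=\norm{(\id-R_+C)\ket\psi}^2$, using $(\id-R_+C)=R_-+R_+(\id-C)$, whose norm is at most $\sqrt p+\sqrt q$. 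We need additional care to get $p+2q+2\sqrt{pq}$ instead of $(\sqrt p+\sqrt q)^2$. To do so we will use $1-F_N=1-\norm{R_+C\psi}^2=\norm{(\id-C)\psi}^2+\norm{R_-C\psi}^2$ and $\norm{R_-C\psi}\le\sqrt p+\sqrt q$. Finally, maximizing $(p+2q+2\sqrt{pq})/(p+q)$ over $t=\sqrt{q/p}$ gives the quadratic optimum $(3+\sqrt5)/2$, the largest eigenvalue of $\begin{pmatrix}1&1\\1&2\end{pmatrix}$.
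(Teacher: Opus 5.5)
\textbf{Gap: the identity $K=CR_+C$ is never established.} Your treatment of the fidelity bound, once $F_N=\norm{R_+C\ket{\psi}}^2$ is granted, ends on the paper's route: $1-F_N=q+\norm{R_-C\ket{\psi}}^2$ with $\norm{R_-C\ket{\psi}}\le\sqrt p+\sqrt q$, followed by the eigenvalue estimate. Two side issues. The intermediate claim $1-F_N=\norm{(\id-R_+C)\ket{\psi}}^2$ should be dropped, since it holds only if $R_+C$ is a projector, i.e.\ if $[R_+,C]=0$, which the lemma explicitly does not assume. Your first expression for $K$ also lacks the overall factor $\tfrac12$.

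The first half of the lemma is what you leave as the ``main obstacle'', and $F_N=\norm{R_+C\ket{\psi}}^2$ and everything after it depend on it. Your computation correctly gives $K=\tfrac12(C_0+C_1+C_0G_TC_1+C_1G_TC_0)$. By contrast, $CR_+C=\tfrac12(C+CG_TC)$ contains the extra term $\tfrac12(C_0G_TC_0+C_1G_TC_1)$. So, contrary to your remark, the vanishing of $C_aG_TC_a$ is exactly what is needed. ``Adjusting the definition of $K$'' would change the statement rather than prove it.

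\textbf{Missing idea: exact anticommutation at party $0$.} The completed reflections $M_{0,0}=\sgnbar(A_+)$ and $M_{0,1}=\sgnbar(A_-)$ from the coordinated polar decomposition satisfy $\{M_{0,0},M_{0,1}\}=0$ as an operator identity (blockwise $\{\tau_z^{(\alpha)},\tau_x^{(\alpha)}\}=0$), not merely on $\ket{\psi}$ and not only at the optimum. Hence $\Pi_{a|0}M_{0,1}=M_{0,1}\Pi_{1-a|0}$, so $\Pi_{a|0}M_{0,1}\Pi_{a|0}=0$. By cross-party commutation, $C_aG_TC_a=\prod_i\Pi_{a|i}M_{i,1}\Pi_{a|i}$, and the vanishing site-$0$ factor kills the whole product.

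A direct argument from the $N$-fold product that does not single out such a site cannot work. Indeed $V_i^\dagger\sigma_x^{(i)}V_i=\Pi_{0|i}M_{i,1}\Pi_{1|i}+\Pi_{1|i}M_{i,1}\Pi_{0|i}$ equals $M_{i,1}$ only when $\{M_{i,0},M_{i,1}\}=0$ holds as an operator. For $i\ge1$ this holds only approximately on the state away from maximal violation. With $C_aG_TC_a=0$ inserted, your argument is complete and coincides with the paper's.
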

\begin{proof}
See Appendix~\ref{appn:lem_union_bound_prelim}.
\end{proof}

Both failure probabilities are controlled by the SOS deficit. First, using $G_T=M_{0,1}P$ and $M_{0,1}^2=\id$ gives
\begin{equation}
 \norm{(G_T-\id)\ket{\psi}}
 =\norm{(P-M_{0,1})\ket{\psi}}.
 \label{eq:global-error}
\end{equation}
Therefore, the polar estimate \cref{eq:sign-operator} implies
\begin{equation}
 p=\frac{1}{4}\norm{(P-M_{0,1})\ket{\psi}}^2
 \leq r_-^2.
 \label{eq:p-bound}
\end{equation}
For any commuting projectors $Q_j$,
\begin{equation}
 \id-\prod_jQ_j\preceq\sum_j(\id-Q_j).
 \label{eq:projector-union}
\end{equation}
Applying this inequality to the commuting factors of $C$, we obtain
\begin{equation}
 \begin{split}
 q&\leq\frac{1}{4}\sum_{k=0}^{N-2}
   \norm{(M_{k,0}-M_{k+1,0})\ket{\psi}}^2\\
  &\leq r_+^2+\frac{1}{4}\sum_{k=1}^{N-2}e_k^2,
 \end{split}
 \label{eq:q-bound}
\end{equation}
where the $k=0$ term is bounded using \cref{eq:sign-operator} with $X=A_+$ and $Q=M_{1,0}$.
Combining \cref{eq:p-bound,eq:q-bound,eq:exact-deficit-decomposition},
\begin{equation}
 \begin{split}
 p+q&\leq r_+^2+r_-^2+\frac{1}{4}\sum_{k=1}^{N-2}e_k^2\\
 &\leq\sqrt{2}\left(
 \frac{r_+^2+r_-^2}{\sqrt{2}}+\sum_{k=1}^{N-2}e_k^2\right)
 \leq\sqrt{2}\eps.
 \end{split}
 \label{eq:pq-deficit-bound}
\end{equation}
Thus \cref{eq:union-bound-combined-prelim} yields
\begin{equation}
 1-F_N\leq\kappa\eps,
 \qquad \kappa:=\frac{3+\sqrt5}{\sqrt{2}}.
 \label{eq:state-fidelity}
\end{equation}

\begin{theorem}[Robust state self-test]
\label{thm:robust_state_selftest}
Let $N\geq3$ and suppose that $\langle\mathcal S_N\rangle_\psi\geq\beta_N^{\rm Q}-\eps$.
Define
\begin{equation}
 F_N^{\rm lb}:=\max\{0,1-\kappa\eps\},
 \qquad \kappa=\frac{3+\sqrt5}{\sqrt2}.
 \label{eq:lowebbound}
\end{equation}
Then the local SWAP isometry extracts an $N$-qubit state satisfying
\begin{equation}
 F_N\geq F_N^{\rm lb}.
 \label{eq:main-fidelity}
\end{equation}
\end{theorem}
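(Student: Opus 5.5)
The theorem is essentially a packaging of the chain of estimates assembled just before it. The plan is to combine the exact pullback of Lemma~\ref{lem:union_bound_prelim} with the deficit bounds on $p$ and $q$, and then add the trivial bound $F_N\geq 0$.

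\textbf{Step 1: reduce $F_N$ to $p$ and $q$.} Since $V$ is built from the coordinated polar decomposition, $M_{0,0},M_{0,1}$ are genuine anticommuting reflections, so each $V_i$ is an isometry and $\rho_{\rm sw}$ is a valid state. Lemma~\ref{lem:union_bound_prelim} then gives $F_N=\norm{R_+C\ket{\psi}}^2$ and
\[
1-F_N\leq \tfrac{3+\sqrt5}{2}(p+q).
\]
This holds without any assumption on the violation, because it uses only that $C$ and $R_+$ are projectors.

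\textbf{Step 2: bound $p$ and $q$ by the SOS residuals.}
\begin{itemize}
\item For $p$: use $G_T=M_{0,1}P$ and $M_{0,1}^2=\id$ to rewrite $R_-\ket{\psi}$ as $\tfrac12(P-M_{0,1})\ket{\psi}$ up to a unitary. Then apply \cref{eq:sign-operator} with $X=A_-$, $\sgnbar(X)=M_{0,1}$, $Q=P$. This is legitimate because $P$ acts on parties $1,\dots,N-1$ and so commutes with both $A_-$ and $M_{0,1}$. The result is $p\leq r_-^2$.
\item For $q$: the factors $(\id+M_{k,0}M_{k+1,0})/2$ are commuting projectors, so \cref{eq:projector-union} gives
\[
q\leq \tfrac14\sum_{k=0}^{N-2}\norm{(M_{k,0}-M_{k+1,0})\ket{\psi}}^2,
\]
using $\norm{(\id-M_kM_{k+1})\psi/2}^2=\tfrac14\norm{(M_k-M_{k+1})\psi}^2$. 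The $k=0$ term is controlled by \cref{eq:sign-operator} with $X=A_+$ and $Q=M_{1,0}$, giving at most $4r_+^2$. The remaining terms are the $e_k^2$.
\end{itemize}

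\textbf{Step 3: insert the deficit.} From \cref{eq:exact-deficit-decomposition}, $(r_+^2+r_-^2)/\sqrt2+\sum_k e_k^2\leq\eps$. Hence
\[
p+q\leq r_+^2+r_-^2+\tfrac14\sum_k e_k^2\leq\sqrt2\,\eps,
\]
since $\tfrac14\leq\sqrt2$. This gives $1-F_N\leq\kappa\eps$ with $\kappa=(3+\sqrt5)/\sqrt2$.

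\textbf{Step 4: conclude.} Since $F_N=\bra{\GHZ_N}\rho_{\rm sw}\ket{\GHZ_N}\geq0$ for any density operator, we get $F_N\geq\max\{0,1-\kappa\eps\}$. The constant is independent of $N$ because the only $N$-dependence sits in the $e_k$ sum, which is absorbed by the deficit.

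The main obstacle is the legitimacy of the polar-sign estimate in Step 2, which relies on Lemma~\ref{lem:inequality_coordinated_polar}. Its hypotheses need $[Q,X]=[Q,\sgnbar(X)]=0$. The Jordan-block dilation of trivial blocks must therefore act only on party $0$'s space, so that $Q$, supported on other parties, still commutes after the dilation. I would check explicitly that the dilation is a local enlargement of $\mathcal K_0$ on which $A_\pm$ extend by zero (or by their block values). If so, $P$ and $M_{1,0}$ extended by identity commute with everything on party $0$, and the factor $4$ in \cref{eq:sign-operator} applies as used.
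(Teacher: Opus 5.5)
Your proposal is correct and follows the paper's route: the paper's proof combines Lemma~\ref{lem:union_bound_prelim} with $p\le r_-^2$ and $q\le r_+^2+\frac14\sum_k e_k^2$, obtained from the polar-sign estimate, then with the deficit decomposition to get $p+q\le\sqrt2\,\eps$, and finally with $F_N\ge0$. One small correction to your Step~1: $V_i$ is an isometry as soon as $M_{i,0},M_{i,1}$ are reflections. The exact anticommutation $\{M_{0,0},M_{0,1}\}=0$ is needed for a different purpose, namely to make $C_aG_TC_a=0$, which gives the pullback $K=CR_+C$; so that lemma uses more than just $C$ and $R_+$ being projectors.
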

\begin{proof}
Combine \cref{eq:state-fidelity} with $F_N\geq0$.
\end{proof}
The coefficient $\kappa$ is independent of $N$ when the error is expressed as the absolute deficit of the unnormalized functional $\mathcal S_N$. This does not imply an $N$-independent tolerance to every physical noise model. For example, consider mixing the reference state with white noise at visibility $v$ gives us the following physical state 
\begin{equation}
\label{eq:white_noise}
\rho_{v} = v\Gamma_N +(1-v)\frac{\id}{2^N}.
\end{equation}
For this state the nonlocality functional gives the following score for the reference measurements
\begin{equation}
\label{eq:SN_white_noise}
\beta^{Q}_{v}=v\beta^{Q}_{N}.
\end{equation}
Thus, for this noise model, the deficit is given by 
\begin{equation}
\label{eq:noise_tol_white_noise}
\epsilon(v)=\beta^{Q}_{N}-\beta^{Q}_{v}=(1-v)\beta^{Q}_{N}.
\end{equation}
Since $\beta^{Q}_{N}$ contains $N$-dependency $\epsilon(v)$ will also contain a $N$ dependency.

To obtain a common auxiliary state for the measurement-action bounds, define
\begin{equation}
 \delta_{st}:=\sqrt{2\left(1-\sqrt{F_N^{\rm lb}}\right)}.
 \label{eq:delta-state}
\end{equation}
Let
\begin{equation}
 \ket{\chi}:=(\id_\xi\otimes\bra{\GHZ_N})
 \ket{\Psi_{\rm sw}}.
 \label{eq:projected-junk}
\end{equation}
By \cref{eq:F-def,eq:rho-sw},
\begin{equation}
 \norm{\ket{\chi}}^2
 =\bra{\Psi_{\rm sw}}(\id_\xi\otimes\Gamma_N)
   \ket{\Psi_{\rm sw}}
 =F_N.
 \label{eq:junk-norm-fidelity}
\end{equation}
Whenever $F_N^{\rm lb}>0$, the normalized auxiliary state
\begin{equation}
 \ket{\xi_\eps}:=\frac{\ket{\chi}}{\sqrt{F_N}}
 \label{eq:normalized-robust-junk}
\end{equation}
is well defined, including at $F_N=1$. Its overlap with the extracted vector satisfies
\begin{equation}
 (\bra{\xi_\eps}\otimes\bra{\GHZ_N})\ket{\Psi_{\rm sw}}=\sqrt{F_N}.
\end{equation}
It follows that
\begin{equation}
 \begin{split}
 &\norm{\ket{\Psi_{\rm sw}}
       -\ket{\xi_\eps}\otimes\ket{\GHZ_N}}^2\\
 &\quad=2(1-\sqrt{F_N})\leq\delta_{st}^2,
 \end{split}
 \label{eq:vector-state-bound}
\end{equation}
In particular, the linear fidelity bound gives $\delta_{st}=O(\sqrt\eps)$ as $\eps\to0$.

We next control the local measurement actions. The exact identity \cref{eq:S-intertwine} holds for every input vector, so
\begin{equation}
 V(M_{i,0}\ket{\psi})
 =\sigma_z^{(i)}V\ket{\psi}
 =\sigma_z^{(i)}\ket{\Psi_{\rm sw}}.
 \label{eq:robust-S-exact}
\end{equation}
Consequently,
\begin{equation}
 \begin{split}
 &\norm{V(M_{i,0}\ket{\psi})
       -\ket{\xi_\eps}\otimes\sigma_z^{(i)}\ket{\GHZ_N}}\\
 &\qquad\leq\delta_{st}.
 \end{split}
 \label{eq:S-action-bound}
\end{equation}

For the other input, we use the following anticommutation estimate.
\begin{lemma}
\label{lem:error_from_incompatibility}
For $1\leq i\leq N-1$,
\begin{equation}
 \begin{split}
 \alpha_i&:=\norm{\{M_{i,0},M_{i,1}\}\ket{\psi}}\\
 &\leq2\norm{(P-M_{0,1})\ket{\psi}}
       +2\norm{(M_{i,0}-M_{0,0})\ket{\psi}}.
 \end{split}
 \label{eq:alpha-kl0}
\end{equation}
\end{lemma}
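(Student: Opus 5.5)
The plan is to reproduce the exact argument leading to \cref{eq:Alice-anti,eq:all-local-anti} at the level of vectors rather than equalities. Every exact on-state substitution will be replaced by the identity plus an error vector, and each error vector will be controlled by the triangle inequality and unitary invariance of the norm. The two on-state relations to use approximately are $P\ket{\psi}\approx M_{0,1}\ket{\psi}$ and $M_{i,0}\ket{\psi}\approx M_{0,0}\ket{\psi}$. The exact operator identity $\{M_{0,0},M_{0,1}\}=0$ from \cref{eq:Bob-reflection-anti} will supply the cancellation.

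\textbf{Step 1 (isolating $M_{i,1}$).} Fix $1\le i\le N-1$ and write $P=M_{i,1}Q_i$ with $Q_i:=\prod_{k\ge1,k\ne i}M_{k,1}$. The operator $Q_i$ is a Hermitian reflection, hence unitary. It acts on parties other than $0$ and $i$, so it commutes with $M_{i,0}$, $M_{i,1}$, $M_{0,0}$ and $M_{0,1}$. Using cross-party commutation,
\begin{equation*}
\{M_{i,0},M_{i,1}\}Q_i\ket{\psi}=M_{i,0}P\ket{\psi}+PM_{i,0}\ket{\psi}.
\end{equation*}
Since $Q_i$ commutes with the anticommutator and is unitary, the norm of the left-hand side equals $\alpha_i$. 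It therefore suffices to bound $\norm{(M_{i,0}P+PM_{i,0})\ket{\psi}}$.

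\textbf{Step 2 (approximate substitutions).} I will expand the two terms as
\begin{equation*}
\begin{split}
M_{i,0}P\ket{\psi}&=M_{i,0}(P-M_{0,1})\ket{\psi}+M_{0,1}(M_{i,0}-M_{0,0})\ket{\psi}+M_{0,1}M_{0,0}\ket{\psi},\\
PM_{i,0}\ket{\psi}&=P(M_{i,0}-M_{0,0})\ket{\psi}+M_{0,0}(P-M_{0,1})\ket{\psi}+M_{0,0}M_{0,1}\ket{\psi}.
\end{split}
\end{equation*}
The first line uses $[M_{i,0},M_{0,1}]=0$. The second uses $[P,M_{0,0}]=0$, which holds because $P$ acts only on parties $1,\dots,N-1$.

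\textbf{Step 3 (cancellation and bounding).} Adding the two lines, the last terms combine to $\{M_{0,0},M_{0,1}\}\ket{\psi}=0$ exactly. The remaining four error vectors each carry a unitary prefactor ($M_{i,0}$, $M_{0,1}$, $P$ or $M_{0,0}$). Applying the triangle inequality then gives
\begin{equation*}
\alpha_i\le 2\norm{(P-M_{0,1})\ket{\psi}}+2\norm{(M_{i,0}-M_{0,0})\ket{\psi}}.
\end{equation*}

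There is no real obstacle here. The point needing care is that the exact cancellation must come from the completed reflections $M_{0,0},M_{0,1}$ and not from $A_\pm$. The $A_\pm$ anticommute exactly but are not unitary, so using them would leave error terms with non-unitary prefactors. This is why the estimate is stated in terms of $M_{0,y}$; those norms can later be related to $r_\pm$ and $e_k$ via \cref{lem:inequality_coordinated_polar}.
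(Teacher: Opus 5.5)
Your proof is correct and is essentially the paper's argument. The paper expands $\{M_{i,0},M_{i,1}\}$ into four terms using the proxy reflection $R_i=M_{i,1}G_T=\prod_{j\neq i}M_{j,1}$. Your operator identity, left-multiplied by the unitary $Q_i$, is exactly that expansion, since $Q_iP=M_{i,1}$ and $Q_iM_{0,1}=R_i$; it uses the same exact cancellation from $\{M_{0,0},M_{0,1}\}=0$ and yields the same four unitary-prefactored error terms.
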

\begin{proof}
See Appendix~\ref{appn:error_from_incompatibility}.
\end{proof}
From \cref{eq:physical-consensus-projectors} we have the following result 
\begin{equation}
\label{eq:agree_on_C}
\begin{split}
M_{i,0}C_{a}&=(-1)^aC_{a}=M_{0,0}C_{a},\\
\implies&(M_{i,0}-M_{0,0})C=0.
\end{split}
\end{equation}
 In other words, $M_{i,0}$ and $M_{0,0}$ agree on the range of $C$ i.e. they will have same other outcomes. Consequently, for the projector that chooses different outcomes for the pair $M_{i,0}$ and $M_{0,0}$ given by
 \begin{equation}
\label{eq:diff_outcome_Mi0M01}
F_{i}=\frac{1}{2}\left(\id-M_{i,0}M_{0,0}\right)
 \end{equation}
 one has
 \begin{equation}
\begin{split}
F_{i}C&=\frac{1}{2}(C-M_{i,0}M_{0,0}C)=0,\\
CF_{i}&=\frac{1}{2}(C-M_{i,0}M_{0,0}C)=0.
\end{split}
 \end{equation}
 Therefore, $F_{i}$ and $C$ are orthogonal projectors, and their sum is also a projector and satisfies 
 \begin{equation}
C+F_i\preceq \id \implies D\preceq \id-C.
 \end{equation}
We then have
\begin{equation}
\begin{split}
 (M_{i,0}-M_{0,0})^2&=2\id-2M_{i,0}M_{0,0},\\
 &=4F_{i}\\
 &\preceq4(\id-C).
\end{split}
 \label{eq:consensus-local-disagreement}
\end{equation}
Thus
\begin{equation}
 \norm{(M_{i,0}-M_{0,0})\ket{\psi}}\leq2\sqrt q,
 \label{eq:uniform-di-bound}
\end{equation}
and \cref{eq:alpha-kl0,eq:p-bound} imply
\begin{equation}
 \alpha_i\leq4(\sqrt p+\sqrt q).
 \label{eq:uniform-alpha-bound}
\end{equation}
This estimate treats all sites $i\geq1$ uniformly.
\begin{lemma}
\label{lem:errorforMi1}
For $1\leq i\leq N-1$ and $F_N^{\rm lb}>0$,
\begin{equation}
 \begin{split}
 &\norm{V(M_{i,1}\ket{\psi})
       -\ket{\xi_\eps}\otimes\sigma_x^{(i)}\ket{\GHZ_N}}\\
 &\qquad\leq\delta_{st}+4\,2^{1/4}\sqrt\eps.
 \end{split}
 \label{eq:T-action-bound}
\end{equation}
\end{lemma}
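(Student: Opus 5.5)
The plan is to compare $V(M_{i,1}\ket{\psi})$ with $\sigma_x^{(i)}V\ket{\psi}$ exactly, using the local SWAP $V_i$ of \cref{eq:isometry_i}, and then to insert the state bound \cref{eq:vector-state-bound}. By the triangle inequality,
\begin{equation*}
\begin{split}
&\norm{V(M_{i,1}\ket{\psi})-\ket{\xi_\eps}\otimes\sigma_x^{(i)}\ket{\GHZ_N}}\\
&\quad\leq\norm{V(M_{i,1}\ket{\psi})-\sigma_x^{(i)}V\ket{\psi}}
+\norm{\sigma_x^{(i)}\bigl(\ket{\Psi_{\rm sw}}-\ket{\xi_\eps}\otimes\ket{\GHZ_N}\bigr)}.
\end{split}
\end{equation*}
Since $\sigma_x^{(i)}$ is unitary, the second term is at most $\delta_{st}$ by \cref{eq:vector-state-bound}. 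It therefore remains to show that the first term is at most $4\,2^{1/4}\sqrt\eps$.

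\textbf{Step 1 (exact local defect).} For any vector $\ket{\phi}$, I will expand $V_i M_{i,1}\ket{\phi}-\sigma_x^{(i)}V_i\ket{\phi}$ in the ancilla basis, which gives
\begin{equation*}
(\Pi_{0|i}M_{i,1}-M_{i,1}\Pi_{1|i})\ket{\phi}\ket{0}_i+(M_{i,1}\Pi_{1|i}M_{i,1}-\Pi_{0|i})\ket{\phi}\ket{1}_i .
\end{equation*}
Using \cref{eq:Pi-ai}, the first coefficient is $\tfrac12\{M_{i,0},M_{i,1}\}$. The second coefficient equals $M_{i,1}(\Pi_{1|i}M_{i,1}-M_{i,1}\Pi_{0|i})=-\tfrac12 M_{i,1}\{M_{i,0},M_{i,1}\}$. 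Because $M_{i,1}$ is unitary and the two ancilla branches are orthogonal, the norm of the defect is $\norm{\{M_{i,0},M_{i,1}\}\ket{\phi}}/\sqrt2$. This is the robust counterpart of \cref{eq:T-projector-identities,eq:T-intertwine}.

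\textbf{Step 2 (lifting to the full isometry).} The remaining factors $V_j$ with $j\neq i$ act on other tensor factors, commute with $M_{i,x}$ and $\sigma_x^{(i)}$, and are isometries. Hence applying them does not change the norm, and we get
\begin{equation*}
\norm{V(M_{i,1}\ket{\psi})-\sigma_x^{(i)}V\ket{\psi}}=\frac{\alpha_i}{\sqrt2},
\end{equation*}
with $\alpha_i$ as in \cref{lem:error_from_incompatibility}.

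\textbf{Step 3 (deficit bound).} I will invoke \cref{eq:uniform-alpha-bound}, $\alpha_i\leq4(\sqrt p+\sqrt q)$, together with $\sqrt p+\sqrt q\leq\sqrt{2(p+q)}$. Then \cref{eq:pq-deficit-bound}, $p+q\leq\sqrt2\,\eps$, gives
\begin{equation*}
\frac{\alpha_i}{\sqrt2}\leq 4\sqrt{p+q}\leq 4\,2^{1/4}\sqrt\eps,
\end{equation*}
which is exactly the stated constant. The hypothesis $F_N^{\rm lb}>0$ is needed only so that $\ket{\xi_\eps}$ is well defined.

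The main obstacle is Step 1: one must check that the residual splits cleanly into anticommutator terms with the exact factor $1/\sqrt2$, with no extra commutator cross terms. Everything else is the triangle inequality and the bounds already established above.
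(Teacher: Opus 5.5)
Your proposal is correct and matches the paper's proof essentially step for step. You use the same ancilla-sector decomposition of $V_iM_{i,1}-\sigma_x^{(i)}V_i$ into $\tfrac12\{M_{i,0},M_{i,1}\}$ and $-\tfrac12 M_{i,1}\{M_{i,0},M_{i,1}\}$, which gives exactly $\alpha_i/\sqrt2$; then \cref{eq:uniform-alpha-bound} and \cref{eq:pq-deficit-bound}; and finally the triangle inequality with \cref{eq:vector-state-bound}, making explicit the step $\sqrt p+\sqrt q\leq\sqrt{2(p+q)}$ that the paper leaves implicit.
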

\begin{proof}
See Appendix~\ref{app:errorforMi1}.
\end{proof}

For the distinguished party, the tested physical observables are
\begin{equation}
 A_{0,y}=\frac{A_++(-1)^yA_-}{\sqrt2},
 \qquad y\in\{0,1\},
 \label{eq:A0ydef}
\end{equation}
and their reference counterparts are
\begin{equation}
 A_{0,y}^{\rm ref}
 =\frac{\sigma_z^{(0)}+(-1)^y\sigma_x^{(0)}}{\sqrt2}.
 \label{eq:A0yref}
\end{equation}

\begin{lemma}
\label{lem:errorA0y}
For $y\in\{0,1\}$ and $F_N^{\rm lb}>0$,
\begin{equation}
 \begin{split}
 &\norm{V(A_{0,y}\ket{\psi})
       -\ket{\xi_\eps}\otimes A_{0,y}^{\rm ref}\ket{\GHZ_N}}\\
 &\qquad\leq\delta_{st}+2^{1/4}\sqrt\eps.
 \end{split}
 \label{eq:measurementerrorA0y}
\end{equation}
\end{lemma}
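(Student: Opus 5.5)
The plan is to mirror the proof of \cref{lem:errorforMi1}, replacing the approximate anticommutation estimate with the exact Bob-side structure. First I write
\[
A_{0,y}=\frac{M_{0,0}+(-1)^yM_{0,1}}{\sqrt2}+\frac{(A_+-M_{0,0})+(-1)^y(A_--M_{0,1})}{\sqrt2}.
\]
The first term is an exact combination of the completed reflections. For $V$ I use the party-$0$ SWAP built from $M_{0,0},M_{0,1}$. By \cref{eq:Bob-reflection-anti}, $\{M_{0,0},M_{0,1}\}=0$ holds as an operator identity. The intertwining relations \cref{eq:S-intertwine,eq:T-intertwine} therefore hold for party $0$ on every input vector, not just on $\ket{\psi}$. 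This gives $V(M_{0,0}\ket{\psi})=\sigma_z^{(0)}\ket{\Psi_{\rm sw}}$ and $V(M_{0,1}\ket{\psi})=\sigma_x^{(0)}\ket{\Psi_{\rm sw}}$. By linearity,
\[
V\Bigl(\tfrac{M_{0,0}+(-1)^yM_{0,1}}{\sqrt2}\ket{\psi}\Bigr)=A_{0,y}^{\rm ref}\ket{\Psi_{\rm sw}}.
\]

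Next I apply the triangle inequality to obtain three pieces. The first piece is $\norm{A_{0,y}^{\rm ref}(\ket{\Psi_{\rm sw}}-\ket{\xi_\eps}\otimes\ket{\GHZ_N})}$. Since $A_{0,y}^{\rm ref}$ is a reflection, hence unitary, this is at most $\delta_{st}$ by \cref{eq:vector-state-bound}. The second piece is the polar-completion error
\[
\frac{1}{\sqrt2}\norm{V\bigl[(A_+-M_{0,0})+(-1)^y(A_--M_{0,1})\bigr]\ket{\psi}}.
\]
Since $V$ is an isometry, this is at most $\frac{1}{\sqrt2}\bigl(\norm{(A_+-M_{0,0})\ket\psi}+\norm{(A_--M_{0,1})\ket\psi}\bigr)$. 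I then use \cref{eq:nearest-sign-operator} with $X=A_+$, $Q=M_{1,0}$, and with $X=A_-$, $Q=P$. Both $Q$ act on other parties, so the commutation hypotheses hold. This bounds the sum by $\frac{1}{\sqrt2}(r_++r_-)$.

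Finally I convert $r_\pm$ into $\eps$. Cauchy--Schwarz gives $r_++r_-\le\sqrt{2(r_+^2+r_-^2)}$. From \cref{eq:exact-deficit-decomposition}, $r_+^2+r_-^2\le\sqrt2\,\eps$. Hence
\[
\frac{1}{\sqrt2}(r_++r_-)\le\frac{1}{\sqrt2}\sqrt{2\sqrt2\,\eps}=2^{1/4}\sqrt\eps.
\]
Adding this to the $\delta_{st}$ term gives the claimed bound.

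The main obstacle I expect is justifying the party-$0$ intertwining on arbitrary vectors. It must hold as an operator identity, not only on $\ket{\psi}$, because the vectors $(A_\pm-M_{0,j})\ket\psi$ and the GHZ-projected pieces are not of the form treated in the exact section. This works only because the coordinated polar decomposition makes $M_{0,0},M_{0,1}$ exactly anticommuting. I will check carefully that the derivation of \cref{eq:T-intertwine} then uses $\{M_{0,0},M_{0,1}\}=0$ as an operator identity, and that the trivial-block dilation does not break this. If needed, I will restate the intertwining as $V_0M_{0,1}=\sigma_x^{(0)}V_0$ on the dilated space.
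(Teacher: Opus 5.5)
Your proposal is correct and follows essentially the same route as the paper. Both arguments split $A_{0,y}$ into the completed combination $\widetilde A_{0,y}=(M_{0,0}+(-1)^yM_{0,1})/\sqrt2$, which intertwines exactly with $A_{0,y}^{\rm ref}$ as an operator identity because $\{M_{0,0},M_{0,1}\}=0$ (so $\Pi_{0|0}M_{0,1}=M_{0,1}\Pi_{1|0}$ on every vector), bound the remainder by $(r_++r_-)/\sqrt2\leq2^{1/4}\sqrt\eps$ using \cref{eq:nearest-sign-operator} and Cauchy--Schwarz, and then add $\delta_{st}$ using that $A_{0,y}^{\rm ref}$ is unitary.
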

\begin{proof}
See Appendix~\ref{app:errorA0y}.
\end{proof}

These estimates use the same isometry and the same normalized auxiliary state, and extend to products with one tested observable per party.
\begin{theorem}[Robust state-and-measurement self-test]
\label{thm:robust-state-measurement-self-test}
Let $N\geq3$ and suppose that $\langle\mathcal S_N\rangle_\psi\geq\beta_N^{\rm Q}-\eps$.
Use $F_N^{\rm lb}$ and $\delta_{st}$ from \cref{eq:lowebbound,eq:delta-state}. Whenever $F_N^{\rm lb}>0$,
there exist local isometries $V_i$, with $V=\bigotimes_iV_i$, and a normalized state $\ket{\xi_\eps}$ such that
\begin{equation}
 \norm{V\ket{\psi}-\ket{\xi_\eps}\otimes\ket{\GHZ_N}}
 \leq\delta_{st}.
 \label{eq:robust-state-final}
\end{equation}
Define the local measurement costs
\begin{equation}
 \omega_{i,x}:=
 \begin{cases}
  2^{1/4}\sqrt\eps,&i=0,\quad x\in\{0,1\},\\[1mm]
  0,&1\leq i\leq N-1,\quad x=0,\\[1mm]
  4\,2^{1/4}\sqrt\eps,&1\leq i\leq N-1,\quad x=1.
 \end{cases}
 \label{eq:local-measurement-costs}
\end{equation}
For every subset $J\subseteq\{0,\ldots,N-1\}$ and every choice of inputs $\mathbf x_J=(x_i)_{i\in J}$,
\begin{equation}
 \begin{split}
 &\Bigg\|V\left[\left(\prod_{i\in J}O_{i,x_i}\right)
             \ket{\psi}\right]\\
 &\qquad-\ket{\xi_\eps}\otimes
       \left(\prod_{i\in J}O_{i,x_i}^{\rm ref}\right)
       \ket{\GHZ_N}\Bigg\|\\
 &\qquad\leq\delta_{st}+\sum_{i\in J}\omega_{i,x_i}.
 \end{split}
 \label{eq:robust-joint-actions-final}
\end{equation}
Here $O_{0,x}=A_{0,x}$, $O_{i,x}=M_{i,x}$ for $i\geq1$, and the reference observables are those of Theorem~\ref{thm:exact}.
\end{theorem}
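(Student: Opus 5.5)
The plan is a telescoping (hybrid) argument: one tested observable is peeled off at a time, with the cost of each peeling charged to the single-site bounds already established. The state part, \cref{eq:robust-state-final}, is exactly \cref{eq:vector-state-bound}, so only \cref{eq:robust-joint-actions-final} needs work. I will order $J=\{i_1<\cdots<i_m\}$ and write $\ket{\psi_\ell}:=O_{i_{\ell+1},x_{i_{\ell+1}}}\cdots O_{i_m,x_{i_m}}\ket{\psi}$, so that $\ket{\psi_m}=\ket{\psi}$ and $\ket{\psi_0}$ is the full product. The induction is on $|J|$, and the base case $J=\varnothing$ is \cref{eq:robust-state-final}.

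\textbf{Structural fact.} Each local isometry $V_i$ of \cref{eq:isometry_i} is built only from $M_{i,0}$ and $M_{i,1}$. Hence for any operator $X$ acting at a site $k\neq i$ we have $V(X\ket{\phi})=(V_{\neq k}\otimes V_kX)\ket{\phi}$. The single-site estimates \cref{eq:S-action-bound,eq:T-action-bound,eq:measurementerrorA0y} are of the form
\[
\norm{V O_{i,x}\ket{\psi}-\ket{\xi_\eps}\otimes O^{\rm ref}_{i,x}\ket{\GHZ_N}}\le\delta_{st}+\omega_{i,x}.
\]
What the induction needs is a \emph{difference} version,
\[
\norm{V O_{i,x}\ket{\psi}-O^{\rm ref}_{i,x}V\ket{\psi}}\le\omega_{i,x},
\]
which can then be transported through the other parties' unitaries. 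I will check that the appendix proofs of \cref{lem:errorforMi1,lem:errorA0y} in fact establish this difference bound first and only then add $\delta_{st}$ via the triangle inequality. For $x=0$ at sites $i\ge1$ this holds exactly, by \cref{eq:robust-S-exact}, which is why $\omega_{i,0}=0$.

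\textbf{Induction step.} Write $O_\ell:=O_{i_\ell,x_{i_\ell}}$ and $O^{\rm ref}_\ell:=O^{\rm ref}_{i_\ell,x_{i_\ell}}$. Since $O_\ell$ commutes with all later factors, which act at other sites,
\[
VO_\ell\ket{\psi_\ell}-O^{\rm ref}_\ell V\ket{\psi_\ell}=\bigl[(V_{i_\ell}O_\ell-O^{\rm ref}_\ell V_{i_\ell})\otimes V_{\rm rest}\bigr]U_\ell\ket{\psi},
\]
where $U_\ell$ is the product of the later (unitary) observables. The delicate point is that the single-site difference bound was proved for $\ket{\psi}$, not for $U_\ell\ket{\psi}$. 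The norm of the local defect $(V_{i}O-O^{\rm ref}V_{i})U\ket{\psi}$ depends only on the reduced state at site $i$ after $U$ acts elsewhere. This reduced state equals that of $\ket{\psi}$ because $U$ is a unitary on the complement, so
\[
\norm{(V_iO-O^{\rm ref}V_i)\otimes\id\,U\ket{\psi}}=\norm{(V_iO-O^{\rm ref}V_i)\ket{\psi}}\le\omega_{i,x}.
\]
Reference Paulis are unitary and commute across sites, so they pass through freely. Telescoping then gives
\[
\norm{V\ket{\psi_0}-\bigl(\textstyle\prod_\ell O^{\rm ref}_\ell\bigr)V\ket{\psi}}\le\sum_{i\in J}\omega_{i,x_i}.
\]
Applying the unitary $\prod_\ell O^{\rm ref}_\ell$ to \cref{eq:robust-state-final} preserves norms and adds $\delta_{st}$.

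\textbf{Main obstacle.} The hard step is securing the difference form of the single-site bounds. For $M_{i,1}$ the appendix likely controls $\norm{(V_iM_{i,1}-\sigma_xV_i)\ket{\psi}}$ by $\alpha_i/2$-type terms, and $\alpha_i\le4(\sqrt p+\sqrt q)\le4\cdot2^{1/4}\sqrt{\eps}$ via \cref{eq:uniform-alpha-bound,eq:pq-deficit-bound}. For $A_{0,y}$ I would bound $\norm{(A_\pm-\sqrt2\cos/\sin\text{-part})\ket{\psi}}$ with \cref{eq:nearest-sign-operator}. If either lemma was instead proved through $\ket{\xi_\eps}$, I would redo that argument in difference form, using the same estimates on $\ket{\psi}$.
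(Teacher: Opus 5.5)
Your proof is essentially the paper's proof. The paper likewise isolates the difference bounds $\norm{V(O_{i,x}\ket{\psi})-O^{\rm ref}_{i,x}V\ket{\psi}}\le\omega_{i,x}$, which the proofs of Lemmas~\ref{lem:errorforMi1} and~\ref{lem:errorA0y} establish before adding $\delta_{st}$. It then moves each local defect $D_i=V_iO_{i,x_i}-O^{\rm ref}_{i,x_i}V_i$ past the other sites by noting that $D_i^\dagger D_i$ commutes with their reflections, which is your reduced-state invariance.

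One correction to a side computation: $\sqrt p+\sqrt q$ is only bounded by $2^{3/4}\sqrt\eps$, not $2^{1/4}\sqrt\eps$. The stated $\omega_{i,1}$ arises because the defect equals $\alpha_i/\sqrt2$ (two orthogonal ancilla sectors, each of norm $\alpha_i/2$), giving $\alpha_i/\sqrt2\le2\sqrt2(\sqrt p+\sqrt q)\le4\sqrt{p+q}\le4\,2^{1/4}\sqrt\eps$.
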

\begin{proof}
The state statement is \cref{eq:vector-state-bound}. For the measurement statement, the exact $M_{i,0}$ intertwining identity
and the estimates proved in Appendices~\ref{app:errorforMi1} and~\ref{app:errorA0y} give
\begin{equation}
 \norm{V(O_{i,x}\ket{\psi})
       -O_{i,x}^{\rm ref}V\ket{\psi}}\leq\omega_{i,x}.
 \label{eq:local-intertwining-cost}
\end{equation}
To telescope a product, write its local error map as $D_i:=V_iO_{i,x_i}-(\id\otimes O_{i,x_i}^{\rm ref})V_i$. Operators at other input sites commute with $D_i^\dagger D_i$. Consequently, multiplying the input by tested reflections at those
sites leaves the norm of this local error unchanged. Tensoring with the other isometries and applying reference reflections also preserves the norm. Replacing the tested observables one site at a time therefore gives
\begin{equation}
 \begin{split}
 &\Bigg\|V\left[\left(\prod_{i\in J}O_{i,x_i}\right)
       \ket{\psi}\right]
       -\left(\prod_{i\in J}O_{i,x_i}^{\rm ref}\right)
        V\ket{\psi}\Bigg\|\\
 &\qquad\leq\sum_{i\in J}\omega_{i,x_i}.
 \end{split}
 \label{eq:joint-intertwining-cost}
\end{equation}
Finally, apply \cref{eq:vector-state-bound} once to the second vector. The product of reference reflections is unitary, so this last step contributes at most $\delta_{st}$.
\end{proof}
\section{Conclusion}

In summary, we have shown that the $N$-qubit GHZ state and the local measurement actions realizing a generating set of its stabilizer group can be self-tested using the genuine LOSR multipartite-nonlocality inequality introduced in Ref.~\cite{MaoEtAl2022}. Specifically, Theorem~\ref{thm:exact} establishes that every quantum realization attaining the maximal quantum value of the corresponding nonlocality functional is equivalent, up to local isometries and auxiliary degrees of freedom, to the reference $N$-qubit GHZ realization and its associated incompatible observables. We have further derived analytic robustness bounds for nonmaximal violations, including the linear fidelity bound $1-F_N\leq\kappa\eps$ with $\kappa=(3+\sqrt{5})/\sqrt{2}$, and $O(\sqrt{\eps})$ vector-norm bounds on the extracted state and the corresponding measurement actions. The Mao functional therefore provides a noise-robust, device-independent certification of the $N$-qubit GHZ state and the measurement actions generating its stabilizer correlations, without any assumption on the dimensions of the underlying Hilbert spaces.

At the level of the quantum realization being certified, our result is related to existing robust self-testing protocols. Stabilizer-based Bell inequalities have been used to self-test graph states and their local measurements \cite{BaccariEtAl2020}, while maximal violation of the Svetlichny inequality can self-test multipartite GHZ states and local anticommuting observables \cite{SinghSasmalPan2025}. The distinction of the present protocol therefore lies not in the target state alone, but in the operational meaning of the certifying functional. In addition to identifying the quantum realization, violation of the Mao inequality excludes causal explanations involving the simultaneous local composition of overlapping resources shared among at most $N-1$ parties, even when the parties are supplemented with shared randomness involving all $N$ parties.

Genuine LOSR multipartite nonlocality necessarily implies ordinary Bell nonlocality, since every Bell-local correlation can be generated using shared randomness alone and is therefore contained in the class of lower-order LOSR-producible correlations. It is, however, neither equivalent to nor generally ordered with genuine multipartite nonlocality in the historical Svetlichny sense \cite{CoiteuxRoyWolfeRenou2021PRL,CoiteuxRoyWolfeRenou2021PRA}. A Svetlichny-local model is a convex mixture over partitions and permits arbitrary, possibly signalling, correlations within the active groups. By contrast, the lower-order LOSR model permits the simultaneous local composition and joint processing of several overlapping causal resources but requires each constituent resource to arise within a causal generalized probabilistic theory. Because these two models enlarge the set of admissible explanations in different directions, neither corresponding notion of genuine multipartite nonlocality contains the other in general. The operational content of genuine LOSR multipartite nonlocality is that the observed correlations require an irreducibly $N$-partite nonclassical resource in any causal generalized probabilistic theory.

The present construction should also be distinguished from network-assisted self-testing \cite{SupicEtAl2023,SarkarOrtheyAugusiak2026}. In network-assisted protocols, a specified network topology and the independence of multiple sources form part of the assumptions used to extract the target state and measurements. No such independent-source decomposition of the physical quantum realization is assumed here. Instead, the network structure enters through the class of alternative causal explanations excluded by the Mao inequality. Consequently, an observed score above the genuine-LOSR bound and sufficiently close to the quantum maximum has two complementary implications: at the theory-agnostic level, it rules out simulation by arbitrary local compositions of lower-order causal resources; within quantum theory, it certifies that the physical realization is close, up to local isometries, to the $N$-qubit GHZ reference realization. This dual certification furnished by a single experimentally accessible nonlocality functional constitutes the principal conceptual contribution of our work.

\section*{Acknowledgements}

The author  acknowledges the help of ChatTGPT, DeepSeek, Claude as aids in exploring and verifying the segments of all mathematical proofs, polishing, condensing and editing this manuscript. The final writing, the mathematical results, their derivations, and their correctness were independently verified by the author, who takes full responsibility for the technical accuracy and integrity.

\bibliography{reference}

\appendix

\section{Jordan's lemma}
\label{app:Jordan}

Let us define the $+1$ spectral projectors of $A_{0,0}$ and $A_{0,1}$ defined in the Hilbert space $\mathcal{K}_{0}$ as
\begin{equation}
 P_0=\frac{\id+A_{0,0}}{2},
 \qquad
 P_1=\frac{\id+A_{0,1}}{2}.
 \label{eq:Jordan-projectors}
\end{equation}
Jordan's lemma gives an orthogonal decomposition of $\mathcal{K}_0$ into common invariant subspaces of $P_0$ and $P_1$,
\begin{equation}
 \mathcal{K}_0=\bigoplus_{\alpha}\mathcal{H}_\alpha,
 \qquad
 \dim\mathcal{H}_\alpha\le2.
 \label{eq:Jordan-direct-sum}
\end{equation}
Consequently,
\begin{equation}
 A_{0,y}=\bigoplus_\alpha A_{0,y}^{(\alpha)},
 \qquad y\in\{0,1\},
 \label{eq:B-direct-sum}
\end{equation}
where $A_{0,y}^{(\alpha)}$ denotes the restriction of $A_{0,y}$ to
$\mathcal H_\alpha$.

Consequently, we can write 
\begin{equation}
\begin{split}
P_0&=\bigoplus_{\alpha}\left(\frac{\id_{\alpha}+A_{0,0}^{(\alpha)}}{2}\right)=\bigoplus_{\alpha}P_0^{\alpha},\\
 P_1&=\bigoplus_{\alpha}\left(\frac{\id_{\alpha}+A_{0,1}^{(\alpha)}}{2}\right)=\bigoplus_{\alpha}P_1^{\alpha}.
\end{split}
 \label{eq:Jordan-projectors-block}
\end{equation}
For completeness, we will  derive the form of the observables on a
nontrivial two-dimensional block corresponding to $\mathcal{H}_{\alpha}$. Let us choose a normalized eigenvector of
$P_0P_1P_0$ on the range of $P_{0}$, $\operatorname{ran}P_0$ with non-zero support in $\mathcal{H}_{\alpha}$.  Thus, for a suitably eigenvector $\ket{u_\alpha}$ of $P_{0}$ satisfying $P_0\ket{u_\alpha}=\ket{u_\alpha}$ we have
\begin{equation}
P_0P_1P_0\ket{u_\alpha} =  \lambda_\alpha\ket{u_\alpha},
 \qquad
 0<\lambda_\alpha<1.
  \label{eq:Jordan-eigen}
\end{equation}
We can define a vector on the orthogonal subspace of $P_0$ in $\mathcal{H}_{\alpha}$ as
\begin{equation}
 \ket{v_\alpha}
 =
 \frac{
 P_1\ket{u_\alpha}
 -\lambda_\alpha\ket{u_\alpha}}
 {\sqrt{\lambda_\alpha(1-\lambda_\alpha)}}.
 \label{eq:Jordan-v}
\end{equation}
It follows that
\begin{equation}
 P_0\ket{v_\alpha}=0\implies
 \braket{u_\alpha|v_\alpha}=0,
\end{equation}
Consequently, in the basis
$\{\ket{u_\alpha},\ket{v_\alpha}\}$, we can define the parts of $P_{0}$ and $P_{1}$ in the Hilbert space $\mathcal{H}_\alpha$ as
\begin{equation}
 \label{eq:Jordan-matrices}
\begin{split}
 P_0^{(\alpha)}
 &=
 \begin{pmatrix}
  1&0\\
  0&0
 \end{pmatrix},\\
 P_1^{(\alpha)}
 &=
 \begin{pmatrix}
 \lambda_\alpha&
 \sqrt{\lambda_\alpha(1-\lambda_\alpha)}
 \\
 \sqrt{\lambda_\alpha(1-\lambda_\alpha)}&
 1-\lambda_\alpha
 \end{pmatrix}.
 \end{split}
\end{equation}
Writing
\begin{equation}
 \lambda_\alpha=\cos^2\theta_\alpha,
 \qquad
 0<\theta_\alpha<\frac{\pi}{2},
\end{equation}
and denoting the Pauli matrices on $\mathcal H_\alpha$ by $\sigma_z^{(\alpha)}$ and $\sigma_x^{(\alpha)}$, we obtain
\begin{align}
 A_{0,0}^{(\alpha)}
 &=2P_0^{(\alpha)}-\id_\alpha
 =\sigma_z^{(\alpha)},
 \label{eq:B0-Jordan-block}\\
 A_{0,1}^{(\alpha)}
 &=2P_1^{(\alpha)}-\id_\alpha \nonumber\\
 &=\cos(2\theta_\alpha)\sigma_z^{(\alpha)}
 +\sin(2\theta_\alpha)\sigma_x^{(\alpha)}.
 \label{eq:B1-Jordan-block}
\end{align}

It is convenient to introduce, separately on each Jordan block, the rotated Pauli pair
\begin{align}
 \tau_z^{(\alpha)}
 &=
 \cos\theta_\alpha\,\sigma_z^{(\alpha)}
 +\sin\theta_\alpha\,\sigma_x^{(\alpha)},
 \label{eq:tau-z-block}\\
 \tau_x^{(\alpha)}
 &=
 \sin\theta_\alpha\,\sigma_z^{(\alpha)}
 -\cos\theta_\alpha\,\sigma_x^{(\alpha)}.
 \label{eq:tau-x-block}
\end{align}
These obey
\begin{equation}
 \left(\tau_z^{(\alpha)}\right)^2
 =
 \left(\tau_x^{(\alpha)}\right)^2
 =\id_\alpha,
 \qquad
 \left\{
 \tau_z^{(\alpha)},\tau_x^{(\alpha)}
 \right\}=0.
 \label{eq:tau-Pauli-relations}
\end{equation}
Inverting Eqs.~\eqref{eq:tau-z-block}-\eqref{eq:tau-x-block} gives, on the block $\mathcal{H}_\alpha$,
\begin{align}
 A_{0,0}^{(\alpha)}
 &=
 \cos\theta_\alpha\,\tau_z^{(\alpha)}
 +\sin\theta_\alpha\,\tau_x^{(\alpha)},
 \label{eq:B0-tau-block}\\
 A_{0,1}^{(\alpha)}
 &=
 \cos\theta_\alpha\,\tau_z^{(\alpha)}
 -\sin\theta_\alpha\,\tau_x^{(\alpha)}.
 \label{eq:B1-tau-block}
\end{align}

Therefore the complete observables are
\begin{align}
 A_{0,0}
 &=
 \bigoplus_\alpha
 \left(
 \cos\theta_\alpha\,\tau_z^{(\alpha)}
 +\sin\theta_\alpha\,\tau_x^{(\alpha)}
 \right),
 \label{eq:B0-tau-direct-sum}\\
 A_{0,1}
 &=
 \bigoplus_\alpha
 \left(
 \cos\theta_\alpha\,\tau_z^{(\alpha)}
 -\sin\theta_\alpha\,\tau_x^{(\alpha)}
 \right).
 \label{eq:B1-tau-direct-sum}
\end{align}
Consequently, for
\begin{equation}
 A_+=\frac{A_{0,0}+A_{0,1}}{\sqrt{2}},
 \qquad
 A_-=\frac{A_{0,0}-A_{0,1}}{\sqrt{2}},
\end{equation}
we obtain
\begin{align}
 A_+
 &=
 \bigoplus_\alpha
 \sqrt{2}\cos\theta_\alpha\,
 \tau_z^{(\alpha)},
 \label{eq:Hplus-Jordan}\\
 A_-
 &=
 \bigoplus_\alpha
 \sqrt{2}\sin\theta_\alpha\,
 \tau_x^{(\alpha)}.
 \label{eq:Hminus-Jordan-appendix}
\end{align}

The one-dimensional Jordan blocks correspond to the endpoint cases $\theta_\alpha=0$ or $\theta_\alpha=\pi/2$.  Such a block may be harmlessly dilated to a two-dimensional local block, without changing the observed correlations, so that Eqs.~\eqref{eq:B0-tau-direct-sum}-\eqref{eq:Hminus-Jordan-appendix} continue to hold with $0\le\theta_\alpha\le\pi/2$.

For a Hermitian operator
\begin{equation}
 X=\sum_\lambda\lambda\Pi_\lambda,
\end{equation}
define
\begin{equation}
 |X|
 =
 \sum_\lambda|\lambda|\Pi_\lambda,
 \qquad
 U_X
 =
 \sum_{\lambda\neq0}
 \operatorname{sgn}(\lambda)\Pi_\lambda.
 \label{eq:polar}
\end{equation}
Thus
\begin{equation}
 X=U_X|X|.
\end{equation}
If $X$ has a nontrivial kernel, $U_X$ is a partial isometry that vanishes on $\ker X$. To obtain a Hermitian reflection,
we retain its action on $\operatorname{supp}|X|$ and choose a reflection on the kernel. For $A_+$ and $A_-$ these choices
are coordinated through their common Jordan decomposition.

To understand this in more detail, note that From Eqs.~\eqref{eq:Hplus-Jordan}-\eqref{eq:Hminus-Jordan-appendix}, on a block $\mathcal H_\alpha$,
\begin{align}
 A_+^{(\alpha)}
 &=
 \sqrt{2}\cos\theta_\alpha\,
 \tau_z^{(\alpha)},
 \label{eq:Hplus-block-polar}\\
 A_-^{(\alpha)}
 &=
 \sqrt{2}\sin\theta_\alpha\,
 \tau_x^{(\alpha)}.
 \label{eq:Hminus-block-polar}
\end{align}
Hence, whenever the corresponding coefficient is nonzero,
\begin{align}
 \operatorname{sgn}
 \left(A_+^{(\alpha)}\right)
 &=
 \tau_z^{(\alpha)}
 \qquad
 (\cos\theta_\alpha>0),
 \label{eq:sign-Hplus-block}\\
 \operatorname{sgn}
 \left(A_-^{(\alpha)}\right)
 &=
 \tau_x^{(\alpha)}
 \qquad
 (\sin\theta_\alpha>0).
 \label{eq:sign-Hminus-block}
\end{align}

There are two endpoint cases.  If $\theta_\alpha=0$, then
\begin{equation}
 A_+^{(\alpha)}
 =
 \sqrt{2}\,\tau_z^{(\alpha)},
 \qquad
 A_-^{(\alpha)}=0.
\end{equation}
The polar part of $A_-^{(\alpha)}$ is therefore undefined on this block, and we complete it by choosing
\begin{equation}
 \overline{\operatorname{sgn}}
 \left(A_-^{(\alpha)}\right)
 :=\tau_x^{(\alpha)}.
\end{equation}
Similarly, if $\theta_\alpha=\pi/2$, then
\begin{equation}
 A_+^{(\alpha)}=0,
 \qquad
 A_-^{(\alpha)}
 =
 \sqrt{2}\,\tau_x^{(\alpha)},
\end{equation}
and we choose
\begin{equation}
 \overline{\operatorname{sgn}}
 \left(A_+^{(\alpha)}\right)
 :=\tau_z^{(\alpha)}.
\end{equation}

Thus the completed signs may be defined uniformly, block by block, as
\begin{align}
 \overline{\operatorname{sgn}}
 \left(A_+^{(\alpha)}\right)
 &:=
 \tau_z^{(\alpha)},
 \label{eq:completed-Hplus-block}\\
 \overline{\operatorname{sgn}}
 \left(A_-^{(\alpha)}\right)
 &:=
 \tau_x^{(\alpha)}.
 \label{eq:completed-Hminus-block}
\end{align}
Taking the direct sum over all Jordan blocks gives the two completed Bob reflections
\begin{align}
 M_{0,0}
 &:=
 \sgnbar(A_+)
 =
 \bigoplus_\alpha\tau_z^{(\alpha)},
 \label{eq:S1-completed-app}\\
 M_{0,1}
 &:=
 \sgnbar(A_-)
 =
 \bigoplus_\alpha\tau_x^{(\alpha)}.
 \label{eq:T1-completed-app}
\end{align}
Their reflection and anticommutation properties now follow blockwise:
\begin{align}
 M_{0,0}^2
 &=
 \bigoplus_\alpha
 \left(\tau_z^{(\alpha)}\right)^2
 =\id,
 \label{eq:S1-reflection}\\
 M_{0,1}^2
 &=
 \bigoplus_\alpha
 \left(\tau_x^{(\alpha)}\right)^2
 =\id,
 \label{eq:T1-reflection}\\
 \{M_{0,0},M_{0,1}\}
 &=
 \bigoplus_\alpha
 \left\{
 \tau_z^{(\alpha)},\tau_x^{(\alpha)}
 \right\}
 =0.
 \label{eq:S1T1-anticommutation}
\end{align}

Note that the coordinated polar decomposition works for arbitrary finite dimensions.

\section{Proof of Lemma~\ref{lem:inequality_coordinated_polar}}
\label{app:inequality_coordinated_polar}

Let $X$ be Hermitian and let $\sgnbar(X)$ be a Hermitian reflection that agrees with $\operatorname{sgn}(X)$ on the support of $X$. Let $Q$ be a reflection commuting with both operators. Since $X$, $Q$, and $\sgnbar(X)$ commute, use their joint spectral decomposition to write
\begin{equation}
 \begin{split}
 X\Pi_\alpha&=t_\alpha\Pi_\alpha,\\
 Q\Pi_\alpha&=q_\alpha\Pi_\alpha,\\
 \sgnbar(X)\Pi_\alpha&=r_\alpha\Pi_\alpha,
 \end{split}
 \label{eq:common-spectral-decomp}
\end{equation}
where $t_\alpha\in\mathbb R$ and $q_\alpha,r_\alpha\in\{-1,1\}$. For $t_\alpha\neq0$, $r_\alpha=\operatorname{sgn}(t_\alpha)$;
for $t_\alpha=0$, either sign can occur. In all cases, $t_\alpha=r_\alpha|t_\alpha|$. 

If $q_\alpha=r_\alpha$, then $|q_\alpha-r_\alpha|=0$ and $|r_\alpha-t_\alpha|=|q_\alpha-t_\alpha|$. If
$q_\alpha=-r_\alpha$, then
\begin{equation}
 \begin{split}
 |q_\alpha-r_\alpha|&=2,\\
 |q_\alpha-t_\alpha|&=1+|t_\alpha|,\\
 |r_\alpha-t_\alpha|&=|1-|t_\alpha||.
 \end{split}
\end{equation}
Hence, in both cases,
\begin{align}
 |q_\alpha-r_\alpha|&\leq2|q_\alpha-t_\alpha|,
 \label{eq:sign-scalar}\\
 |r_\alpha-t_\alpha|&\leq|q_\alpha-t_\alpha|.
 \label{eq:nearest-sign-scalar}
\end{align}
For $\ket{\psi_\alpha}:=\Pi_\alpha\ket{\psi}$, orthogonality of
the spectral sectors gives
\begin{equation}
 \begin{split}
 \norm{(Q-\sgnbar(X))\ket{\psi}}^2
 &=\sum_\alpha|q_\alpha-r_\alpha|^2
   \norm{\ket{\psi_\alpha}}^2\\
 &\leq4\norm{(Q-X)\ket{\psi}}^2,
 \end{split}
\end{equation}
and
\begin{equation}
 \begin{split}
 \norm{(\sgnbar(X)-X)\ket{\psi}}^2
 &=\sum_\alpha|r_\alpha-t_\alpha|^2
   \norm{\ket{\psi_\alpha}}^2\\
 &\leq\norm{(Q-X)\ket{\psi}}^2.
 \end{split}
\end{equation}
These are \cref{eq:sign-operator,eq:nearest-sign-operator}. 

\section{Proof of Lemma~\ref{lem:union_bound_prelim}}
\label{appn:lem_union_bound_prelim}

Note that the local SWAP isometry of \cref{eq:isometry_i} satisfies
\begin{equation}
 (\mathbb I_{\mathcal K_i^\xi}\otimes\langle a|_i)V_i =M_{i,1}^{a}\Pi_{a|i},\quad a\in\{0,1\}.
\end{equation}
Tensoring these identities over all parties, with the identity on the purification system understood, gives
\begin{equation}
 (\mathbb I_\xi\otimes\langle\mathbf a|)V
 =\prod_{i=0}^{N-1}M_{i,1}^{a_i}\Pi_{a_i|i}.
\end{equation}
Taking the GHZ superposition of the all-zero and all-one strings therefore yields
\begin{equation}
\begin{split}
    W:=(\id_\xi&\otimes\bra{\GHZ_N})V\\
    &=\left(
   \prod_{i=0}^{N-1}\Pi_{0|i}
   +\prod_{i=0}^{N-1}M_{i,1}\Pi_{1|i}
 \right)\\
 &  =\frac{C_0+G_TC_1}{\sqrt{2}}.
\end{split}
 \label{eq:GHZ-component-map}
\end{equation}

 Consequently, using $C_a^2=C_a$ and $G_T^2=\id$,
\begin{equation}
 \begin{split}
  W^{\dagger}W
 &=\frac{1}{2}\bigl(C_0+C_1+C_0G_TC_1+C_1G_TC_0\bigr).
 \end{split}
 \label{eq:GHZ-pullback-expanded}
\end{equation}
Exact anticommutation at party $0$ $\{M_{0,1},M_{0,0}\}=0$ implies $\Pi_{a|0}M_{0,1}\Pi_{a|0}=M_{0,1}\Pi_{a+1|0}\Pi_{a|0}=0$ where $a+1$ is addition modulo two. All operators at the other parties commute with $\Pi_{a|0}$ and $M_{0,1}$. Therefore
\begin{equation}
 C_aG_TC_a=0,\qquad a\in\{0,1\}.
 \label{eq:zero-diagonal-GT-blocks}
\end{equation}
Expanding $CR_+C=C(\id+G_T)C/2$ and using \cref{eq:zero-diagonal-GT-blocks} proves $W^{\dagger}W=CR_+C$. Since $R_+$ is a projector, this yields \cref{eq:exact-GHZ-pullback}.

To estimate the fidelity, use 
\begin{equation}
\begin{split}
C\ket{\psi}&=R_{-}C\ket{\psi}+R_{+}C\ket{\psi},\\
\norm{C\ket{\psi}}^2 &=\norm{R_{-}C\ket{\psi}}^2+\norm{R_{+}C\ket{\psi}}^2
\end{split}
\end{equation}
and obtain
\begin{equation}
 \begin{split}
 1-F_N
 &=1-\norm{C\ket{\psi}}^2
   +\norm{R_-C\ket{\psi}}^2\\
 &=q+\norm{R_-C\ket{\psi}}^2.
 \end{split}
 \label{eq:fidelity-exact-decomposition}
\end{equation}
The identity $R_-C\ket{\psi}=R_-\ket{\psi}-R_-(\id-C)\ket{\psi}$, the triangle inequality, and $\norm{R_-}_\infty\leq1$ imply
\begin{equation}
 \norm{R_-C\ket{\psi}}\leq\sqrt p+\sqrt q.
 \label{eq:compressed-phase-error}
\end{equation}
Thus
\begin{equation}
 \begin{split}
 1-F_N&\leq q+(\sqrt p+\sqrt q)^2\\
 &=p+2q+2\sqrt{pq}.
 \end{split}
\end{equation}
Finally,
\begin{equation}
 \begin{split}
 p+2q+2\sqrt{pq}
 &=\begin{pmatrix}\sqrt p&\sqrt q\end{pmatrix}
   \begin{pmatrix}1&1\\1&2\end{pmatrix}
   \begin{pmatrix}\sqrt p\\\sqrt q\end{pmatrix}\\
 &\leq\frac{3+\sqrt{5}}{2}(p+q),
 \end{split}
\end{equation}
because the largest eigenvalue of the displayed matrix is $(3+\sqrt5)/2$. This proves \cref{eq:union-bound-combined-prelim} without imposing any commutation relation between $C$ and $R_\pm$.

\section{Proof of Lemma~\ref{lem:error_from_incompatibility}}
\label{appn:error_from_incompatibility}

For $i=1,\ldots,N-1$, define the reflection
\begin{equation}
 R_i:=M_{i,1}G_T=\prod_{j\neq i}M_{j,1}.
 \label{eq:Rk-def}
\end{equation}
It acts only at sites other than $i$. Therefore,
\begin{equation}
 [R_i,M_{i,0}]=0,
 \qquad [M_{i,1},M_{0,0}]=0.
\end{equation}
Moreover,
\begin{equation}
 (R_i-M_{i,1})\ket{\psi}
 =M_{i,1}(G_T-\id)\ket{\psi},
 \label{eq:Rk-error-vector}
\end{equation}
so \cref{eq:global-error} gives
\begin{equation}
 \norm{(R_i-M_{i,1})\ket{\psi}}
 =\norm{(P-M_{0,1})\ket{\psi}}.
 \label{eq:Rk-error}
\end{equation}
Because the factor at site $0$ in $R_i$ is $M_{0,1}$, its exact anticommutation with $M_{0,0}$ gives
\begin{equation}
 \{R_i,M_{0,0}\}
 =\left(\prod_{j\neq0,i}M_{j,1}\right)
   \{M_{0,1},M_{0,0}\}=0.
 \label{eq:RkMl0}
\end{equation}
Using these commutation and anticommutation identities, expand
\begin{equation}
 \begin{split}
 \{M_{i,0},M_{i,1}\}
 ={}&M_{i,0}(M_{i,1}-R_i)\\
 &+R_i(M_{i,0}-M_{0,0})\\
 &+M_{0,0}(M_{i,1}-R_i)\\
 &+M_{i,1}(M_{i,0}-M_{0,0}).
 \end{split}
 \label{eq:alpha-k-expansion}
\end{equation}
Applying this identity to $\ket{\psi}$, the triangle inequality and the unitarity of the left factors imply
\begin{equation}
 \begin{split}
 \norm{\{M_{i,0},M_{i,1}\}\ket{\psi}}
 &\leq2\norm{(M_{i,1}-R_i)\ket{\psi}}\\
 &\quad+2\norm{(M_{i,0}-M_{0,0})\ket{\psi}}.
 \end{split}
 \label{eq:alpha-k}
\end{equation}
Substituting \cref{eq:Rk-error} proves \cref{eq:alpha-kl0}.

\section{Proof of Lemma~\ref{lem:errorforMi1}}
\label{app:errorforMi1}

For a local input vector, with identities on all other systems understood, the SWAP isometry gives
\begin{equation}
 \begin{split}
 &\left[V_iM_{i,1}-(\id\otimes\sigma_x^{(i)})V_i\right]
 \ket{\psi}\\
 &\quad=\frac{1}{2}\{M_{i,0},M_{i,1}\}\ket{\psi}\ket0\\
 &\qquad-\frac{1}{2}M_{i,1}\{M_{i,0},M_{i,1}\}
                \ket{\psi}\ket1.
 \end{split}
\end{equation}
The two terms occupy orthogonal ancilla sectors and $M_{i,1}$ is unitary. Tensoring with the other local isometries preserves
the norm, so
\begin{equation}
 \begin{split}
 &\norm{V(M_{i,1}\ket{\psi})-\sigma_x^{(i)}V\ket{\psi}}\\
 &\qquad=\frac{\alpha_i}{\sqrt2}.
 \end{split}
 \label{eq:sigmaxerror}
\end{equation}
Now \cref{eq:uniform-alpha-bound,eq:pq-deficit-bound} imply
\begin{equation}
 \begin{split}
 \frac{\alpha_i}{\sqrt2}
 &\leq2\sqrt2(\sqrt p+\sqrt q)\\
 &\leq4\sqrt{p+q}\leq4\,2^{1/4}\sqrt\eps.
 \end{split}
 \label{eq:robust-T-intertwine}
\end{equation}
Adding and subtracting $\sigma_x^{(i)}\ket{\Psi_{\rm sw}}$ in the desired measurement-action difference gives
\begin{equation}
 \begin{split}
 &\norm{V(M_{i,1}\ket{\psi})
       -\ket{\xi_\eps}\otimes\sigma_x^{(i)}\ket{\GHZ_N}}\\
 &\quad\leq
   \norm{V(M_{i,1}\ket{\psi})-\sigma_x^{(i)}V\ket{\psi}}\\
 &\qquad+
   \norm{\ket{\Psi_{\rm sw}}-\ket{\xi_\eps}\otimes\ket{\GHZ_N}}\\
 &\quad\leq4\,2^{1/4}\sqrt\eps+\delta_{st},
 \end{split}
\end{equation}
where the Pauli operator preserves the vector norm. This proves
\cref{eq:T-action-bound}.

\section{Proof of Lemma~\ref{lem:errorA0y}}
\label{app:errorA0y}

Define the completed reference combination
\begin{equation}
 \widetilde A_{0,y}:=
 \frac{M_{0,0}+(-1)^yM_{0,1}}{\sqrt2}.
 \label{eq:tildeA0y}
\end{equation}
The exact operator relation $\{M_{0,0},M_{0,1}\}=0$ gives exact intertwining for both completed reflections at party $0$, for
every input vector. Thus
\begin{equation}
 \begin{split}
 VM_{0,0}&=\sigma_z^{(0)}V,\\
 VM_{0,1}&=\sigma_x^{(0)}V,\\
 V\widetilde A_{0,y}&=A_{0,y}^{\rm ref}V.
 \end{split}
 \label{eq:isometryAoytilde}
\end{equation}
By \cref{eq:nearest-sign-operator},
\begin{equation}
 \begin{split}
 \norm{(A_+-M_{0,0})\ket{\psi}}&\leq r_+,\\
 \norm{(A_--M_{0,1})\ket{\psi}}&\leq r_-.
 \end{split}
 \label{eq:completed-raw-residuals}
\end{equation}
Consequently, the triangle and Cauchy-Schwarz inequalities give
\begin{equation}
 \begin{split}
 \norm{(A_{0,y}-\widetilde A_{0,y})\ket{\psi}}
 &\leq\frac{r_++r_-}{\sqrt2}\\
 &\leq\sqrt{r_+^2+r_-^2}
 \leq2^{1/4}\sqrt\eps,
 \end{split}
 \label{eq:Bob-raw-robust}
\end{equation}
where the last step uses \cref{eq:exact-deficit-decomposition}. Since $V$ is an isometry, \cref{eq:isometryAoytilde} also yields
\begin{equation}
 \begin{split}
 &\norm{V(A_{0,y}\ket{\psi})-A_{0,y}^{\rm ref}V\ket{\psi}}\\
 &\qquad=\norm{(A_{0,y}-\widetilde A_{0,y})\ket{\psi}}
 \leq2^{1/4}\sqrt\eps.
 \end{split}
 \label{eq:raw-first-party-intertwining}
\end{equation}
Finally, $A_{0,y}^{\rm ref}$ is a reflection, so
\begin{equation}
 \begin{split}
 &\norm{V(A_{0,y}\ket{\psi})
       -\ket{\xi_\eps}\otimes A_{0,y}^{\rm ref}\ket{\GHZ_N}}\\
 &\quad\leq
 \norm{V(A_{0,y}\ket{\psi})-A_{0,y}^{\rm ref}V\ket{\psi}}\\
 &\qquad+
 \norm{\ket{\Psi_{\rm sw}}-\ket{\xi_\eps}\otimes\ket{\GHZ_N}}\\
 &\quad\leq2^{1/4}\sqrt\eps+\delta_{st}.
 \end{split}
\end{equation}
This proves \cref{eq:measurementerrorA0y} with the same auxiliary state used in the state and all other measurement-action bounds.

\end{document}